\RequirePackage{fix-cm}
\documentclass{svjour3-HAL}                     
\usepackage{verbatim}
\usepackage{stackengine}
\usepackage{amssymb}
\usepackage{amsmath}
\usepackage{graphicx}
\usepackage{comment}
\usepackage[noend]{algorithmic}

\usepackage{amsmath}
\usepackage{wasysym}
\usepackage{booktabs}

\usepackage[ruled]{algorithm}
\usepackage{algorithmic}

\usepackage{url}
\urldef{\mailsb}\path|nicolas.spyratos@lri.fr|

\usepackage{todonotes}

\begin{document}

\title{Tree Databases}



%
%

\author{Nicolas Spyratos}
\institute{LISN Laboratory - University Paris-Saclay, CNRS, INRIA, France\\
Affiliated Scienist, FORTH Institute of Computer Science,  Greece\\
\mailsb\\~\\
{\bf Acknowledgment:} Work conducted while the author was visiting with the HCI Group at FORTH Institute of Computer Science, Greece (https://www.ics.forth.gr/)}
\date{}

\maketitle

\begin{abstract}
We propose a novel database model whose basic structure is a labeled, directed tree. Intuitively, the root of the tree is seen as an object (or entity), the non-root nodes as attributes of the object and the semantics of each attribute is represented by the unique path leading from the root to the attribute. 
We define a tree database to be a set of such trees. The trees of the database can be combined to produce new trees using a set of operations on trees that we define in the paper. 
The query language of our model offers two types of queries, traversal queries and analytic queries. 
A query (whether traversal or analytic) is always defined over a tree, 
which is either a tree in the database or a tree derived from other trees using tree operations. The operations on trees and the query language are both defined using a simple functional algebra whose operations are: restriction of a function, composition of functions, pairing of functions and Cartesian product of sets.

\noindent A distinctive feature of our model is that traversal queries and analytic queries are both defined within the same formal framework; and in fact, traversal queries serve as the building blocks for analytic queries. This is in sharp contrast with the relational model, where analytic queries are defined outside the relational algebra, in the form of SQL Group-by queries. Therefore our model supports data access and data analysis within the same formal framework.  

\noindent We demonstrate the expressive power of our model by showing: (a) how our model can support inheritance in a seamless manner, (b) how one can define consistent relational databases on top of a tree database - with the tree database playing the role of an underlying semantic layer and (c) how a tree database can be used as a user-friendly interface for accessing and analyzing relational data.

\end{abstract}

\begin{keywords}
{Data model.~Graph database model.~Conceptual modeling.~Query language.~Data Analysis.~Interface}
\end{keywords}

\section{Introduction}

The basic idea underlying this work is that the data sets of an application and their relationships can be modeled as a labeled, directed tree, in which ach node stores a data set of the application and each edge stores a total function (i.e. a set of key-value pairs) from its source node to its target node. A set of such trees is what we call a {\em tree database} (or simply database, hereafter). Intuitively, the root of a tree in the database is seen as an object (or entity), the non-root nodes as attributes of the object and the semantics of each attribute is represented by the unique path leading from the root to the attribute. 


\noindent An example of a tree database is shown in Figure \ref{Fig-1}. It contains four trees, $\mathcal T_1, \mathcal T_2, \mathcal T_3, \mathcal T_4$, storing data concerning a distribution center (e.g. Walmart) which delivers products of various types in a number of branches. A delivery invoice contains the following data: the date of delivery, the branch in which the delivery took place, the type of product delivered (e.g. $Coca Light$) and the quantity (i.e. the number of units delivered of that type of product). There is a separate invoice for each type of product delivered, and the data on all invoices during the year are stored in $\mathcal T_1$ for yearly analyses and planning purposes. At any given moment during the year, the root $Inv$ of $\mathcal T_1$ stores the set of all invoice identifiers so far (e.g. a set of integers) and the non-root nodes store the data appearing on the invoices (e.g. the node `Date' stores the delivery dates appearing on the invoices, the node `Branch' stores all identifiers of branches to which deliveries were made, and so on). As for the edges of $\mathcal T_1$, each edge  $e: Inv\to X $ stores a function (i.e. a set of key-value pairs) associating each invoice number $i$ of $Inv$ with the corresponding data $e(i)$ stored in node $X$ (e.g. the edge $b: Inv \to Branch$ associates each invoice number $i$ of $Inv$ with the branch identifier $b(i)$ of $Branch$ appearing on invoice $i$). 
By the way, to simplify our discussions, we shall use the terms `edge' and `function' interchangeably whenever no ambiguity is possible; and similarly we shall use interchangeably the terms `node' and `set'. 

\noindent The remaining trees of our example database store auxiliary information needed for the analysis of the data stored in tree $\mathcal T_1$. Thus $\mathcal T_2$ stores the region in which each branch is located; $\mathcal T_3$ stores the supplier and the category of each product type; and $\mathcal T_4$ 
stores the unit price of a product as a function of its supplier and its category. 
\begin{figure}\label{Fig-1}
{
\begin{center}
\includegraphics[width=350px,keepaspectratio]{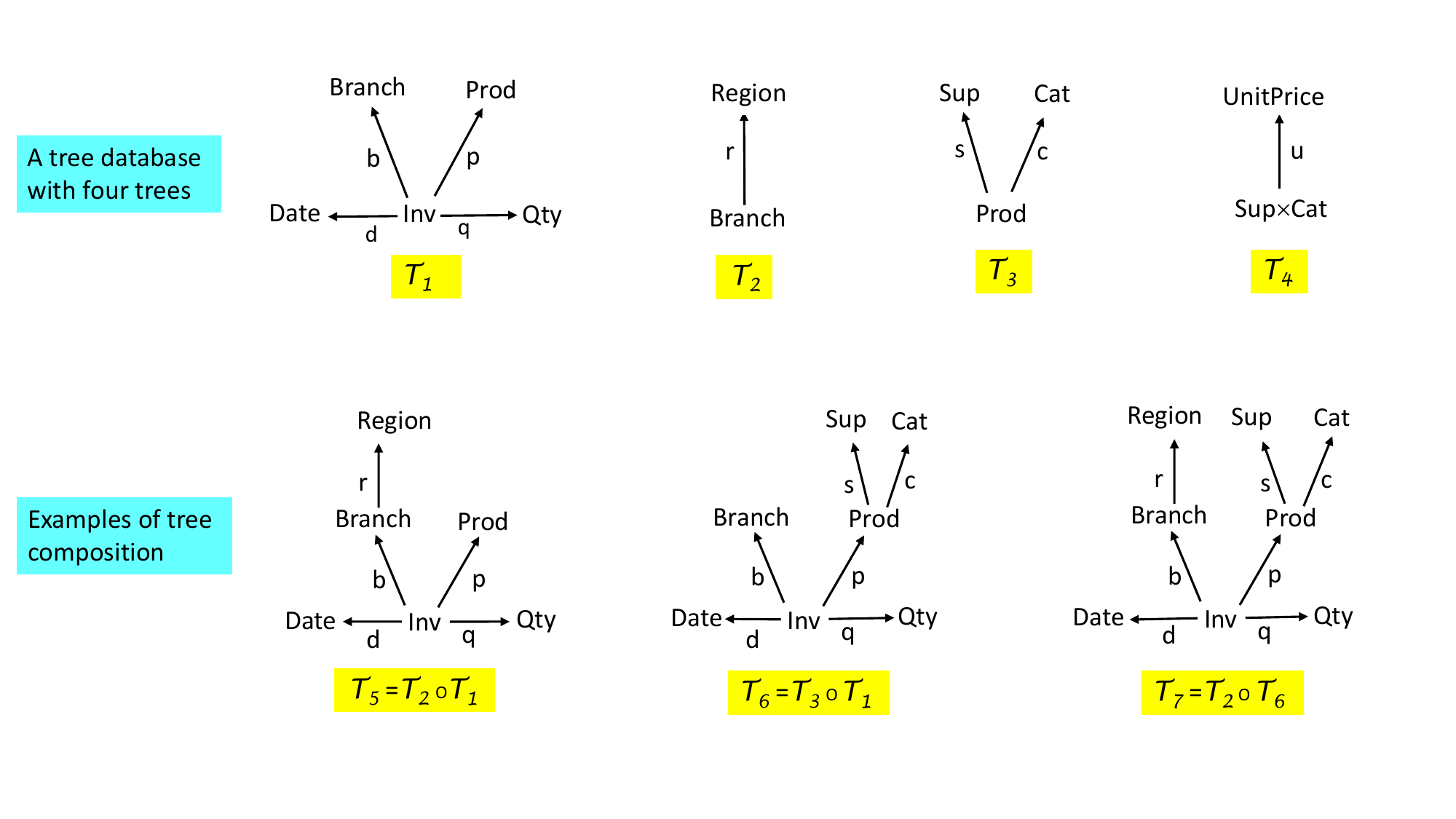}
\caption{A tree database and examples of tree composition} 
\end{center}
}
\end{figure} 
\smallskip 

\noindent Now, the trees of a database can be combined to produce new trees using a set of operations on trees that we shall define in the paper. Such operations on trees create new trees and therefore new ways of accessing information. An example of tree operation is `tree composition' defined as follows: a tree $\mathcal T$ is composable with tree $\mathcal T'$ if the root of $\mathcal T'$ is a node of $\mathcal T$ and, additionally, this is the only node shared by the two trees (this unique shared node is then called the `link' of the comopsition). If $\mathcal T$ is composable with tree $\mathcal T'$ then their composition is denoted by $\mathcal T'\circ \mathcal T$ and the result is a new `larger' tree. \smallskip

\noindent Figure \ref{Fig-1} shows three examples of tree composition. First, $\mathcal T_1$ is composable with $\mathcal T_2$ (with the node $Branch$ as the link) and the result of their composition is the tree $\mathcal T_5= \mathcal T_2\circ \mathcal T_1$ as shown in the figure; similarly, $\mathcal T_1$ is composable with $\mathcal T_3$ (with the node $Prod$ as the link) and the result of their composition is the tree $\mathcal T_6= \mathcal T_3\circ \mathcal T_1$; and finally, $\mathcal T_6$ is composable with $\mathcal T_2$ (with the node $Branch$ as the link) and the result of their composition is the tree $\mathcal T_7= \mathcal T_2\circ \mathcal T_6$. By the way, we shall use the tree $T_7$ as our running example in most of our discussions throughout the paper. \smallskip

\noindent Suppose now that $\mathcal T$ is a tree, which is either in the database or derived from those in the database as explained above. Then the question is: how can one extract information from $\mathcal T$? To this end, we need operations to combine nodes and edges in $\mathcal T$ so as to formulate queries. In our model we use one operation on sets, namely {\em Cartesian product}; and three operations on functions, namely {\it restriction} of a function to a subset of its domain of definition, {\it composition} of two functions, and {\it pairing} of two functions with the same source. We shall refer to these operations, collectively, as the {\em functional algebra}. Note that these are well known, elementary operations except possibly for pairing which is defined as follows: given two functions $f: X \to Y$ and $g: X \to Z$ with $X$ as their common source, we call {\em pairing} of $f$ and $g$, denoted as $f \wedge g$, the function defined by:
\begin{center}
 $f \wedge g: X \to Y \times Z$ such that $(f \wedge g)(x)= (f(x), g(x))$ for all $x$ in $X$    
\end{center}
Note that pairing works as a tuple constructor. Indeed, if we view the elements of $X$ as identifiers (or `keys'), then for each $x$ in $X$ the pairing constructs a tuple $(x, f(x), g(x))$ consisting of $x$ and the images of $x$ under the input functions. Clearly, the definition of pairing can be extended to more than two functions with the same source in a straightforward manner. Also note that function composition and function pairing express the only two ways edges combine in a graph (i.e. either forming a path or a fork).  \smallskip

\noindent Our model offers a query language comprising two types of queries, {\em traversal queries} and {\em analytic queries}. They are both defined using the functional algebra defined above. Given a tree $\mathcal T$, a {\em path expression} over $\mathcal T$ is just the composition of the edges of a path of $\mathcal T$; and a traversal query over $\mathcal T$ is either a single path expression  or the pairing of two or more path expressions having the same source. Referring to Figure \ref{Fig-1}, here are two examples of path expressions (and therefore of traversal queries) over the tree $T_7$: \smallskip

\noindent $PE_1: r \circ b$, returning, for each $i\in Inv$, the region associated with $i$ 

\noindent $PE_2: s\circ p$, returning, for each $i\in Inv$, the  product associated with $i$ \smallskip

\noindent By pairing these two path expressions we obtain the following traversal query:\smallskip

\noindent $Q= (r \circ b)\wedge (s\circ p)$, returning for each $i\in Inv$, the  region-product pair associated with $i$ \smallskip

\noindent Note that the answer of a path expression is a function from the source of the path to its target; and that the answer of a traversal query is a function from the common source of its path expressions to the Cartesian product of the targets of its path expressions. In order to simplify our discussions, we will often confuse a traversal query with its answer, which is a function. For example we will say `the function $r \circ b$' to mean the answer to the traversal query  $r \circ b$. \smallskip

\noindent Similarly, an {\em analytic expression} over a tree $\mathcal T$ is a triple of the form $\langle g, m, op \rangle$, where $g$ and $m$ are traversal queries over $\mathcal T$ with the same source (serving for grouping and measuring, respectively) and $op$ is an aggregate operation applicable on the target of $m$; and an analytic query is either a single analytic expression  or the pairing of two or more analytic expressions having the same source. Referring to Figure \ref{Fig-1}, here are two  examples of analytic expressions over the tree $T_7$: \smallskip

\noindent$ AE_1: (b, q, sum)$ is an analytic expression returning the total quantity delivered by $Branch$ 

\noindent $AE_2: (b, q, avg)$ is an analytic expression returning the average quantity delivered by $Branch$ \smallskip

\noindent To compute the answer to $AE_1$ we proceed as follows: first, given a branch $x$, we group together all invoices that refer $x$ (actually, this is the set $S= b^{-1}(x)$); then we apply the function $q$ to every invoice $i$ in $S$ to find the quantity corresponding to $i$; and finally, we sum up the quantities for all $i\in S$ to find the total quantity for branch $x$ (and similarly for computing the answer of $AE_2$). Note that the answer to an analytic expression is a function from the target of the grouping query $b$ to the target of the aggregate operation (and as this target is a node not appearing in the tree, the user is asked to name it (e.g. $BranchTot$ in our example). The answer to $AE_2$ is computed in a similar manner returning the average quantity delivered by branch, say $BranchAvg$. Next, by pairing these two analytic expressions we obtain the following analytic query:\smallskip

\noindent $Q'= (b, q, sum)\wedge (b, q, avg)$, 

\noindent returning, for each $i\in Inv$, the pair $(BranchTot, BranchAvg)$ associated with $i$ \smallskip

\noindent Note that the answer of a an analytic expression $(g, m, op)$ is a function from the target of $g$ to the target of $op$; and the answer of an analytic query is a function from the common source of its analytic expressions to the Cartesian product of the targets of its analytic expressions.\smallskip

\noindent Roughly speaking, traversal queries parallel relational algebra queries, whereas analytic queries parallel SQL Group-by queries. A distinctive feature of our model is that traversal queries and analytic queries are both defined within the same formal framework; and in fact, traversal queries serve as the building blocks for analytic queries. This is in sharp contrast to the relational model, where analytic queries are defined outside the relational algebra, in the form of SQL Group-by queries. Therefore our model supports data access {\em and} data analysis within the {\em same} formal framework.\smallskip

\noindent We demonstrate the expressive power of our model by showing: (a) how our model can support inheritance in a seamless manner, (b) how one can define consistent relational databases on top of a tree database - with the tree database playing the role of an underlying semantic layer and (c) how a tree database can be used as a user-friendly interface for accessing and analyzing relational data. \smallskip

\noindent A fundamental aspect of our model is that data is stored in the form of trees that is data is not stored sequentially (i.e., not in a linear order). Instead, data is organized across multiple levels, forming a hierarchical structure. 
Graph data models in general have gained considerable attention in the last few decades since they use nodes, relationships between nodes and key-value properties instead of tables to represent information \cite{anuyah2024understandinggraphdatabasescomprehensive}\cite{DBLP:reference/bdt/GutierrezHW19}. Therefore they are typically substantially faster for associative data sets. 
While other database models compute relationships expensively at query time, a graph database stores connections as first-class citizens, readily available for any `join-like' navigation operation. Any purposeful query over a graph database can be easily answered, as data is easily accessed using traversals. A traversal is how you query a graph, navigating from starting nodes to related nodes according to an algorithm \cite{DBLP:journals/corr/AnglesABHRV16}. Graph-based models present a high degree of flexibility and are able to capture complex non-linear relationships. In a graph data model relationships are of equal importance to the data itself. This means you are not required to infer connections between entities using special properties such as foreign keys or out-of-band processing like map-reduce. \smallskip

\noindent In contrast, relational databases store highly-structured data in tables with predetermined columns and rows of specific types of information. Due to the rigidity of their organization, relational databases require developers and applications to strictly structure the data used in their applications. In relational databases, references to other rows and tables are indicated by referring to primary key attributes via foreign key columns; joins are computed at query time by matching primary and foreign keys of all rows in the connected tables. These operations are compute-heavy and memory-intensive, and have an exponential cost; and when many-to-many relationships occur in the model, you must introduce a join table (or associative entity table) that holds foreign keys of the participating tables, further increasing join operation costs.

 
\noindent Graph databases have become a cornerstone of modern data management, enabling complex queries, pattern discovery, and knowledge inference in highly connected datasets \cite{DBLP:reference/bdt/GutierrezHW19}. There is a substantial body of literature around the use of graphs in computer science, including several tools to support graph management by means of digital technology, as well as data models and query languages based on graphs. The reader is referred to \cite{anuyah2024understandinggraphdatabasescomprehensive}\cite{DBLP:conf/sigmod/ArenasGS21}\cite{DBLP:conf/aib/Hogan22}\cite{DBLP:journals/csur/HoganBCdMGKGNNN21}\cite{0c1f962ff899492c9ea15bd38a0ffb65} for a comprehensive analysis of graph-based approaches to data and knowledge management, emphasizing the relation between graph databases and knowledge graphs, and including an extensive bibliography. A rather detailed discussion on the power and limitations of graph databases can be found in \cite{DBLP:conf/cisim/Pokorny15}. Moreover, several graph database systems have appeared in recent years such as Neo4j Graph Database, ArangoDB, Amazon Neptune, Dgraph and a host of others\footnote{https://www.g2.com/categories/graph-databases (retrieved on March 4, 2026)}. A comparison of the main graph database models can be found in \cite{DBLP:conf/data/FernandesB18}.

\section{The Formal Definition of a Tree Database}\label{sec:Formal}

In this section we give formal definitions of the basic concepts seen informally in the introduction, namely `tree database' and `functional algebra'; and we also introduce the important concept of integrity constraint over a tree database. 



\subsection{Tree Database}

To define formally the concept of tree database we need some auxiliary concepts and notation. \smallskip

\noindent First, we call {\em universe} any finite abstract set over which we define `nodes'. A {\em node} over $U$ is either an element of $U$ ({\em simple node}) or the Cartesian product of two or more elements of $U$ ({\em composite node} or {\em product node}). 
We assume that each simple node $A$ over $U$ is associated with a fixed set of values, called the {\em domain} of $A$ and denoted by $dom(A)$. The domain of a node may be a finite or an infinite set of values (e.g. the set of the two boolean values or the set of integers can be node domains).  
As for a product node, its domain is defined to be the product of the domains of its factors; for example, for $A, B$ in $U$, $dom(A \times B)= dom(A) \times dom(B)$. 

\smallskip 
\noindent Second, throughout the paper, by `tree over $U$' we mean a finite, directed and labeled graph whose nodes are nodes over $U$ such that: (a) there is a distinguished node with no incoming edge called the {\em root}, (b) there is exactly one path from the root to every non-root node and (c) all edges and all nodes have distinct labels.  Moreover, we call {\em leaf} any node with no outgoing edge and we assume that the set of successors of every node is unordered. 

\smallskip 
\noindent Third, we shall call {\em star tree} any tree in which all leaves are (immediate) successors of the root.
\begin{definition}[Tree Database]
Let $U$ be a universe. A {\em tree database} over $U$ is a finite set $\mathcal D= \{\mathcal T_1, \ldots, \mathcal T_n\}$ of trees over $U$ such that: 
    \begin{enumerate}
    \item In each tree, the root can be a simple or a composite node over $U$, while each non-root node must be a simple node. 
    \item Each simple node over $U$ must appear in at least one tree (i.e. no superfluous nodes in $U$).
    \item Each simple node $X$ over $U$ can appear at most once in a given tree $\mathcal T_i$  (but $X$ may appear in two or more different trees). 
    \item Every tree node $X$ is associated with a special edge $\iota_X: X \to X$, called the {\em identity edge} of $X$.
   \end{enumerate} 
\end{definition}

\noindent Motivation for the above definition of a tree database comes from the fact that an object (or entity) and its attributes can be seen as a tree in which the root node represents the object; the non root nodes represent the attributes of the object; and thus a tree database is seen as a set of objects. Moreover, an object may have a complex structure (hence a root is allowed to be a product node), whereas an attribute is seen as having no structure (hence the requirement that a non root node be a simple node). \smallskip

\noindent Note that, although a node can appear at most once in a given tree (item 3 of the above definition), a node may appear in two or more different trees that is two different objects may share the same attribute (e.g. an employee and a manager share the attribute `salary'). In other words, all nodes of $U$ must appear in the database (item 2) but each node can appear at most once in a given tree (either as a simple node or as a component of a product node). Also note that the above definition allows an edge to appear in two or more different trees (with the same or different edge labels) provided that its source and target nodes satisfy the constraints expressed by items 2 and 3. For example, consider the two trees: \smallskip

$\mathcal T: X\xrightarrow{f} Y\xrightarrow{g} Z$ ~~~and~~~ $\mathcal T': Y\xrightarrow{g} Z\xrightarrow{h} W$ \smallskip

\noindent Although the edge $Y\xrightarrow{g} Z$ is shared by $\mathcal T$ and $\mathcal T'$, the two trees can be in the same database.  Similarly, consider the two (one-edge) trees: \smallskip 

$\mathcal T: Emp\xrightarrow{works-for} Dep$ ~~~ and ~~~ $\mathcal T':Emp \xrightarrow{manages} Dep$ \smallskip

\noindent Although they have the same source and the same target (but different labels), they can be in the same database.  In both the above examples, what differentiates the nodes or edges shared by two or more  trees in a database are their instances - a notion that we will discuss shortly. \smallskip

\noindent The significance of the edge $\iota_X$ will become evident when discussing the query language of our model (sections \ref{sec:TQ} and \ref{sec:AQ}).\\

\noindent Now, as we explained informally in the introduction, the nodes and edges of a database tree represent the datasets of an application and their relationships; and these datasets and relationships may change with time. The contents of a database tree $\mathcal T$, at any given time $t$, is what we call the {\em instance} of $\mathcal T$ at time $t$; and the set of all tree instances at time $t$ is what we call the {\em database instance} at time $t$.


\begin{definition}[Tree Instance]\label{TrIn}
Let $U$ be a universe, $\mathcal D= \{\mathcal T_1, \dots, \mathcal T_n\}$ a database over $U$ and  $\mathcal T$ a tree in $\mathcal D$. An {\em instance} of $\mathcal T$ is a function $\delta_\mathcal T$ that associates the nodes and edges of $\mathcal T$ with values such that:
\begin{itemize}
    \item for each node $N$ of $\mathcal T$, $\delta_\mathcal T(N)$ is a finite nonempty subset of $dom(N)$ 
    \item for each edge $f: X \to Y$ of $\mathcal T$, $\delta_\mathcal T(f)$ is a  {\em total function} from $\delta_\mathcal T(X)$ to $\delta_\mathcal T(Y)$
    \item for each node $N$ of $\mathcal T$, $\delta_\mathcal T(\iota_N)$ is the identity function on $\delta_\mathcal T(N)$

\end{itemize}
Moreover, an {\em instance} of $\mathcal D$ at time $t$ is defined to be the set $\delta= \{\delta_{T_1}, \ldots, \delta_{T_n}\}$ of instances of all trees of $\mathcal D$ at time $t$.
\end{definition}

\noindent An important remark is in order here regarding the above definition. Although the domain of a node can be an infinite set, the instance of a node (whether simple or composite) is always a {\em finite} nonempty set of values from its domain; and as a consequence, the instance of an edge $f: X \to Y$ is always a finite nonempty function. Moreover, as we assume that $\delta_\mathcal T(f)$ is a {\em total} function, this means that $\delta_\mathcal T(f)$ is defined on every value of $\delta(X)$, which means that we assume {\em no nulls}. \smallskip
\noindent Another important remark is that, as mentioned earlier, a node $X$ may be shared by two or more different trees of the database, say $\mathcal T$ and $\mathcal T'$. In this case, $X$ may be associated with different instances in the two trees that is we may have $\delta_\mathcal T(X) \neq \delta_{\mathcal T'}(X)$.  For example, in Figure~\ref{Fig-1}, the node $Branch$ is shared by the trees $\mathcal T_1$ and $\mathcal T_2$ and we have, in general, $\delta_{\mathcal T_1}(Branch) \neq \delta_{\mathcal T_2}(Branch)$. This is natural as $\delta_{\mathcal T_1}(Branch)$ is the set of all branches to which products have been delivered, whereas $\delta_{\mathcal T_2}(Branch)$ is the set of all branches of the distribution company (i.e. including those branches to which no deliveries have been made).\smallskip


\noindent In the rest of the paper, in order to simplify our discussions, we shall not make the distinction between a node and its instance or between an edge and its instance - unless necessary. Indeed, more often than not, the intended meaning of terms will be evident from the way these terms are used. For example, if $e$ is an edge of a tree $\mathcal T$ and we write $e^{-1}$ this will clearly mean $\delta_\mathcal T(e)^{-1}$; and similarly, if $f: X \to Y$ and $g: Y \to Z$ are two edges and we write $g \circ f$ this will clearly mean the composition $\delta_\mathcal T(g) \circ \delta_\mathcal T(f)$; or if $P: X \xrightarrow{f_1} Y \xrightarrow{f_2}Z$ is a path and we write $P(x)$, where $x$ in $X$, this will clearly mean $(\delta_\mathcal T(f_2)\circ \delta_\mathcal T(f_1))(x))$; and similarly for a path of $n$ edges $f_1, f_2, \ldots, f_n, n\ge 1$.\smallskip

\subsection{The functional algebra}\label{subsec:FA} 
In order to access the data stored in a tree database users need operations to combine nodes and edges in a tree so as to formulate queries. In our model, as mentioned in the introduction, we use four well known, elementary operations that we call, collectively, the {\em functional algebra}. 

\begin{definition}[Functional algebra]\label{FA}
Given a tree $\mathcal T$, the functional algebra of $\mathcal T$ consists of the following operations: 
\begin{itemize}
    \item {\em Cartesian product} of sets (and its accompanying projection functions).
    \item {\em Restriction}: Given a function $f:X\to Y$ and a subset $S$ of $X$, the restriction of $f$ to $S$, denoted by $f/S$, is the function $f/S: S \to Y$ defined by: $(f/S)(x)= f(x)$, for all $x$ in $S$.
    \item {\em Pairing}: Given two functions $f:X \to Y$ and $g:X \to Z$ with common source, their pairing, denoted by $f\wedge g$, is the function $f\wedge g:X \to Y \times Z$ such that: $(f\wedge g)(x)= (f(x), g(x))$ for all $x$ in $X$.
    \item {\em Composition}: Given two functions $f:X \to Y$ and $g:Y \to Z$, their composition, denoted by $g \circ f$ is the function $g \circ f:X \to Z$ defined by: $(g \circ f)(x)= g(f(x))$ for all $x$ in $X$.
\end{itemize} 
\end{definition} 

\noindent Note that the operations of composition and pairing can be defined for more than two functions in the obvious way. \smallskip

\noindent An important remark concerning the functional algebra is that its four operations are strongly connected to each other as stated in the following lemma. Its proof is a direct consequence from the definitions of the operations.

\begin{lemma}\label{BasicLemma} Let $X, Y, Z$ be three sets, and let $f: X \to Y$ and $g: X \to Z$ be two functions with common source. Then the following hold: 

\noindent $\pi_Y \circ (f \wedge g)= f$ ~and~ $\pi_Z \circ (f \wedge g)= g$

\noindent (where $\pi_Y$ and $\pi_Z$ denote the projections of $Y\times Z$ over $Y$ and $Z$, respectively).
\end{lemma}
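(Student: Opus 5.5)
The plan is to prove the two equalities by extensionality: two functions are equal exactly when they have the same source, the same target, and agree at every element of the source. For $\pi_Y \circ (f \wedge g)= f$ I will check each of these three conditions using only the definitions of pairing and composition in Definition~\ref{FA} and the definition of the projection $\pi_Y: Y \times Z \to Y$, $\pi_Y(y,z)= y$. The second equality follows by the same argument with $\pi_Z$, or by symmetry after swapping the roles of $(f,Y)$ and $(g,Z)$.

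First I will check that the composition is well formed and has the right type. By Definition~\ref{FA}, $f \wedge g$ is a function $X \to Y \times Z$. The projection $\pi_Y$ has source $Y \times Z$, so the composition $\pi_Y \circ (f \wedge g)$ is defined and is a function $X \to Y$. This is the same source and target as $f$.

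Next, for arbitrary $x$ in $X$, I will unfold the definitions:
\[
(\pi_Y \circ (f \wedge g))(x) = \pi_Y\big((f \wedge g)(x)\big) = \pi_Y\big((f(x), g(x))\big) = f(x).
\]
Since this holds for every $x \in X$, the two functions coincide. The identical computation with $\pi_Z(y,z)= z$ gives $(\pi_Z \circ (f \wedge g))(x) = g(x)$.

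There is no real obstacle here. The only point needing care is to state that equality of functions is taken extensionally with matching source and target. If one reads the statement at the level of instances $\delta_\mathcal T$, the same argument applies verbatim to the instance functions, which are total by Definition~\ref{TrIn}.
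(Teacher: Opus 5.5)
Your proof is correct and takes the same approach as the paper, which only remarks that the lemma is a direct consequence of the definitions of the operations. Your pointwise unfolding $(\pi_Y \circ (f \wedge g))(x) = \pi_Y(f(x), g(x)) = f(x)$, together with the source/target check, is exactly that argument written out.
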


\smallskip

\noindent There is an interesting `derived' operation, called `product of functions', defined as follows:


\begin{definition}[Product of functions]
Let $f: X \to Y$ and $g: X' \to Y'$ be two functions. The {\em product} of $f$ and $g$ is the function $f \times g: X \times X' \to Y \times Y'$ defined by: $(f \times g)(x, x')= (f(x), g(x'))$ for all $(x, x')$ in $X \times X'$.
\end{definition}

\noindent Clearly, the above definition of product can be extended to more than two functions in a straightforward manner. The following lemma states how the product can be derived from operations of the functional algebra, namely using projection, composition and pairing. Its proof follows immediately from the definitions. 

\begin{lemma}\label{Prod}
Let $f: X \to Y$ and $g: X' \to Y'$ be two functions. Then we have: $f \times g= (f \circ \pi_X) \wedge (g \circ \pi_{X'})$.

\end{lemma}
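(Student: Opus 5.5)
The plan is to prove the identity $f \times g= (f \circ \pi_X) \wedge (g \circ \pi_{X'})$ by the principle of extensionality: two functions are equal exactly when they have the same source, the same target, and agree at every point of the source. So I would first check that the right-hand side is well formed and has the correct type, and then compare the two sides pointwise.

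\emph{Typing.} Let $\pi_X: X\times X' \to X$ and $\pi_{X'}: X \times X' \to X'$ be the projections that come with the Cartesian product of the functional algebra (Definition~\ref{FA}). Then the compositions $f\circ \pi_X: X\times X' \to Y$ and $g \circ \pi_{X'}: X \times X' \to Y'$ are defined. These two functions have the common source $X \times X'$, so their pairing is defined. By the definition of pairing, it is a function $X\times X' \to Y \times Y'$, which is exactly the source and target of $f \times g$.

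\emph{Pointwise agreement.} Fix an arbitrary $(x,x')\in X\times X'$. Unfolding pairing, then composition, then the projections gives
\[
\bigl((f \circ \pi_X) \wedge (g \circ \pi_{X'})\bigr)(x,x') = \bigl(f(\pi_X(x,x')),\, g(\pi_{X'}(x,x'))\bigr) = (f(x), g(x')).
\]
By the definition of the product of functions, $(f(x), g(x'))$ is $(f\times g)(x,x')$. Since $(x,x')$ was arbitrary, the two functions coincide.

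There is no real obstacle here. The only care needed is in the typing step: one has to observe that the pairing is legitimate because both composites share the source $X\times X'$. One could also cross-check the result against Lemma~\ref{BasicLemma}. That lemma says $\pi_Y \circ (f\times g) = f\circ \pi_X$ and $\pi_{Y'} \circ (f\times g)= g \circ \pi_{X'}$, so $f\times g$ is the unique function into $Y\times Y'$ with these two projections. This uniqueness is just the universal property of the product, and it gives the same equality.
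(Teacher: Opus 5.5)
Your proof is correct and matches the paper, which gives no argument beyond saying the identity follows immediately from the definitions; your typing check plus the pointwise unfolding of pairing, composition and projection is exactly that verification. One small quibble about your optional cross-check: Lemma~\ref{BasicLemma} is about pairings, so it gives $\pi_Y\circ\bigl((f\circ\pi_X)\wedge(g\circ\pi_{X'})\bigr)=f\circ\pi_X$ (and similarly for $\pi_{Y'}$), whereas $\pi_Y\circ(f\times g)=f\circ\pi_X$ comes directly from the definition of the product, and the fact that a map into $Y\times Y'$ is determined by its two components is an extra fact not stated in that lemma.
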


\noindent We now introduce the basic concept of `expression' over a database tree and its evaluation in a database instance. 

\begin{definition}[Expression over a Tree] \label{Expr}
    Let $\mathcal D$ be a database and let $\mathcal T$ be a tree in $\mathcal D$. An {\em expression} over $\mathcal T$ is either an edge of $\mathcal T$ or a well formed expression $E$ whose operands are edges of $\mathcal T$ and whose operations are among those of the functional algebra. Moreover, given an instance $\delta_\mathcal T$, the {\em value} of $E$ in $\delta_\mathcal T$, denoted as $Val_E(\delta_\mathcal T)$, is obtained by (a) replacing the nodes and edges of $E$ with the values assigned to them by $\delta_\mathcal T$ and (b) performing the operations.
\end{definition}

\noindent Note that this definition is in the spirit of a similar definition in the relational model, namely the definition of relational expression over a database schema \cite{Ullman}. 
Indeed, if $\mathcal S$ is a database schema and $\mathcal D$ a relational database over $\mathcal S$ then a relational expression over $\mathcal S$ is a well formed expression $E$ whose operands are relation schemas of $\mathcal S$ and whose operations are among those of the relational algebra. Moreover, the value of $E$ in $\mathcal D$ is obtained by replacing the relation schemas of $E$ with the relations assigned to them by $\mathcal D$ and performing the operations. \smallskip

 \noindent It is important to note that every expression $E$ has a source and a target that can be defined recursively based on the sources and targets of the edges appearing in $E$. For example, referring to tree $\mathcal T_7$ of Figure \ref{Fig-1}, if $E_1= r \circ b$ then $source(E_1)= Inv$ and $target(E_1)= Region$; and similarly, if $E_2= (r \circ b) \wedge p$  then $source(E_2)=Inv$ and $target(E_2)= Region \times Product$. Clearly, the value of an expression is always a function from the source of the expression to its target. \smallskip

 \noindent A particular kind of expression over a tree, called `path expression' will be frequently used in the rest of this paper. Intuitively, a path expression is the composition of edges along a path. 
\begin{definition}[Path expression]
 Let  $\mathcal T$ be a database tree. A {\em path expression} over $\mathcal T$ is a well formed expression $PE$ whose operands are the edges of a path of $\mathcal T$ and whose only operation is composition. Moreover, given an instance $\delta_\mathcal T$, the value of $PE$ is defined to be the composition of the functions assigned to the edges of $PE$ by $\delta_\mathcal T$ (therefore, a path expression with source $S$ and target $A$ evaluates to a function from $S$ to $A$). 
\end{definition}
\noindent Note that, as there is at most one path between any two nodes of a tree, every path is associated to one and only one path expression, namely the composition of its edges; and conversely, every path expression is associated to one and only one path (namely, the sequence of its operands in reverse order). \smallskip

\noindent We end this section with an interesting observation relating the functional algebra with functional dependency theory, and in particular with Armstrong's axioms for functional dependencies in relational databases~\cite{DBLP:conf/ifip/Armstrong74}\cite{DBLP:books/cs/Maier83}. Let $\mathcal T$ be a database tree, $\delta_\mathcal T$ an instance of $\mathcal T$ and $f: X \to Y$ a function, where $X, Y$ are nodes of $\mathcal T$. We shall say that $\delta_\mathcal T$ {\em implies} $f$ if $f$ is the value of an expression over $\mathcal T$, for all $\delta_\mathcal T$. Then the question is: what is the set of all functions that are `implied' by $\delta_\mathcal T$? The following lemma gives a first answer. 
\begin{lemma}\label{AA}
 Let $\delta_\mathcal T$ be an instance of a database tree $\mathcal T$.  Then the following hold:
 \begin{itemize}
     \item If $X$ is a simple node then  $\delta_\mathcal T$ implies the function $\iota_X$; and if $X$ is a product node and $Y$ a sub-product of $X$ then $\delta_\mathcal T$ implies the function $\pi_Y: X \to Y$.
     \item If two functions $f:X \to Y$ and $g:Y \to Z$ are implied by $\delta_\mathcal T$ then $g\circ f:X \to Z$ is also implied by $\delta_\mathcal T$.
     \item If the function $f:X \to Y$ is implied by $\delta_\mathcal T$ and $Z$ is a node of $\mathcal T$ then $f \times \iota_Z:X \times Z \to Y \times Z$ is also implied by $\delta_\mathcal T$.
 \end{itemize}
 \end{lemma}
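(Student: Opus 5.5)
We argue item by item, directly from the definition of ``implied'': a function $f: X \to Y$ is implied by $\delta_\mathcal T$ if it is the value of some expression over $\mathcal T$ (Definition~\ref{Expr}), for every instance. So in each case the task is to exhibit an expression whose operands are edges of $\mathcal T$ and whose operations come from the functional algebra (Definition~\ref{FA}), and whose value is the required function.

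\emph{First item.} If $X$ is a simple node, item 4 of the definition of a tree database attaches the identity edge $\iota_X$ to $X$. This edge is therefore an expression over $\mathcal T$. By Definition~\ref{TrIn} its value in any instance is the identity function on $\delta_\mathcal T(X)$. If $X$ is a product node and $Y$ a sub-product of $X$, we use the fact that the Cartesian product is listed in the functional algebra ``together with its accompanying projection functions''. Hence $\pi_Y: X \to Y$ is an admissible operation applied to the node $X$. For safety we write it as the expression $\pi_Y \circ \iota_X$, whose operand $\iota_X$ is an edge of $\mathcal T$ and whose value is exactly $\pi_Y$.

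\emph{Second item.} Suppose $f$ is the value of an expression $E_f$ and $g$ the value of an expression $E_g$. Then $E_g \circ E_f$ is well formed, because the target of $E_f$ is the source of $E_g$, namely $Y$. Its value is computed operand-wise, so it equals $g \circ f$ in every instance.

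\emph{Third item.} This is the step that needs an earlier result, and it is where care is required. We invoke Lemma~\ref{Prod}, which gives
\[
f \times \iota_Z = (f \circ \pi_X) \wedge (\iota_Z \circ \pi_Z).
\]
Here $\pi_X$ and $\pi_Z$ are the projections of $X \times Z$, which are implied by the first item. Composition preserves implication by the second item. What remains is to check that pairing also preserves implication. This holds because if $E_1$ and $E_2$ are expressions with common source $X \times Z$, then $E_1 \wedge E_2$ is a well-formed expression whose value is the pairing of their values.

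The main obstacle is making sense of $X \times Z$ as a source, since it need not be a node of $\mathcal T$. We will treat it as a product node admitted by the Cartesian product operation of the functional algebra. The projections from it are then legitimate operations, just as in the first item. With that reading, the three items follow from the fact that the set of implied functions is closed under the operations of the functional algebra, mirroring reflexivity, transitivity and augmentation in Armstrong's axioms.
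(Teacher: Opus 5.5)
Your proof is correct and follows the same route as the paper, which settles the three items in one line each: the identity edge and the projections accompanying the Cartesian product give the first item, composition of expressions gives the second, and taking the product of $f$ with the identity function of $Z$ gives the third. Your appeal to Lemma~\ref{Prod}, rewriting $f \times \iota_Z$ as $(f \circ \pi_X) \wedge (\iota_Z \circ \pi_Z)$, spells out a step the paper leaves implicit: that this derived product is expressible with the operations of the functional algebra.
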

 
\noindent{\em Proof} The first implication follows from the definition of Cartesian product of sets; the second follows by applying composition of functions; and the third follows by taking the product of $f$ with the identity function of $Z$.  \smallskip

\noindent We recall here that the properties expressed in the above lemma correspond to Armstrong's axioms, or implication rules for functional dependencies in relational databases (namely, reflexivity, transitivity and augmentation, respectively)~\cite{DBLP:conf/ifip/Armstrong74}\cite{DBLP:books/cs/Maier83}). However, the above lemma shows only the `soundness' of these implication rules, therefore the accompanying question is the following: is there a `derivation' system showing completeness? Answering this kind of questions lies outside the scope of the present paper and it is part of our current work. 

\smallskip
\noindent We end this section by summarizing a few basic properties of the operations of the functional algebra (whose proofs follow easily from the definitions):

\begin{enumerate}
    \item Strictly speaking, the Cartesian product is neither associative nor commutative. However, there are always natural bijections between the various forms of a Cartesian product, for example, between $A \times B \times C$, $(A \times B) \times C$ and $A \times (B \times C)$, or between $A \times B$ and $B \times A$, and so on. Therefore, in our setting, we shall assume that the Cartesian product is both associative and commutative. As a consequence of this assumption, pairing is also associative and commutative. 
    \item Composition is an associative operation, therefore we can group together its component functions as it is more convenient for our purposes. For example, $(h \circ s) \circ p= h \circ (s \circ p)$. 
    \item Composition distributes over pairing. For example referring to tree $\mathcal T_7$ of Figure~\ref{Fig-1} we have $(s\wedge c)\circ p= (s\circ p)\wedge (c\circ p)$.
    \item For any functions $f:X \to Y$ and $g:Y \to Z$ we have: 
    
    if $S \subseteq X$ then $g \circ (f/S)= (g \circ f)/S$ 
    \item For any functions $f:X \to Y$ and $g:X \to Z$ we have: 
    
    if $S\subseteq X$ then $(f \wedge g)/S= (f/S) \wedge (g/S)$ 
\end{enumerate}

\noindent As a last remark, the functional algebra introduced in this section constitutes the foundation of our model and the basic tool for defining the tree operations and the query language that we shall see in the following sections.

\subsection{Integrity constraints}\label{IC} 

Integrity constraints are properties that the database must satisfy so that the quality of information is maintained. Examples of such constraints from relational databases are domain constraints, key constraints, or referential constraints \cite{Ullman}. Integrity constraints ensure that changes in the database are performed in such a way that data integrity is not affected. \smallskip

\noindent In our model we distinguish two kinds of constraints, namely those imposed by the definition of database over a universe $U$, and those that may be imposed by the application. There are two constraints imposed by our definition of database, namely (a) every instance of a database tree $\mathcal T$ must associate each node $X$ with a {\em finite} subset $\delta_\mathcal T(X)$ of $dom(X)$ and (b) each edge $f: X \to Y$ with a {\em total} function  $\delta_\mathcal T(f):\delta_\mathcal T(X)\to \delta_\mathcal T(Y)$ (meaning that no nulls are allowed in the database). 

\noindent Incidentally, the second constraint implies a referential constraint similar to referential constraints in the relational model. Indeed as each edge $e: X \to Y$ of a tree is interpreted as a total function, every object of $\delta_\mathcal T(X)$ must `refer' to an object of $\delta_\mathcal T(Y)$ (i.e. no `dangling pointers').\smallskip

\noindent As for constraints that may be imposed on a database by the application, we consider two kinds in this paper: (a) an inter-tree constraint that we call {\em inclusion constraint} and (b) an intra-tree constraint that we call {\em refinement constraint} or {\em dependency constraint}. An inclusion constraint applies to two trees, $\mathcal T$ and $\mathcal T'$ that share a node, say $X$, and expresses that we must have that $\delta_\mathcal T (X)\subseteq \delta_{\mathcal T'}(X)$, in every database instance. We denote this by $\delta_\mathcal T (X)\sqsubseteq \delta_{\mathcal T'}(X)$. For example, in the database of Figure~\ref{Fig-1}, where the trees $\mathcal T_1$ and $\mathcal T_2$ share the node $Prod$ we may impose the inclusion constraint $\delta_{\mathcal T_1}(Prod)\sqsubseteq \delta_{\mathcal T_2}(Prod)$ expressing that the set of products delivered are among those available at the company. 
 \smallskip
\noindent A  refinement constraint on the other hand states that, given a tree $\mathcal T$, an expression $E$ must contain {\em finer information} than an expression $E'$ with the same source as $E$. It is an information-theoretic constraint referring to the information content of expressions having the same source. To understand the definition of this constraint, recall that the value (instance) of an expression $E$ over a tree is a total function and therefore its information content can be defined as the partition $p_E$ that this function induces on its domain of definition. It follows that we can compare two expressions having the same source using the ordering of the partitions they induce on their common source. 

\begin{definition}[Refinement constraint]\label{RC}
Let $\mathcal T$ be a tree and let $E, E'$ be expressions over $\mathcal T$ having the same source. Then $E$ is said to be {\em finer} than or equal to $E'$, denoted by $E \leq E'$, if $p_E \leq p_{E'}$. Moreover, if $E \leq E'$ then we say that $E$ {\em determines} $E'$, denoted by $E \to E'$.  
\end{definition}

\noindent The following lemma states an interesting property of refinement constraints also showing their relationship to functional dependencies of the relational model. 

\begin{lemma}\label{EdgeDep}
 Let $\mathcal T$ be a tree and let  $f: X \to Y$ and $g: X \to Z$  be two edges of $\mathcal T$ having the same source.   Then $f \leq g$ if and only if there is a function $h: Y \to Z$ such that $h \circ f = g$. Moreover, the function $h$ is unique.
\end{lemma}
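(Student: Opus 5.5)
We prove the lemma for the instance values of the edges: $f$ is a total function on $\delta_\mathcal T(X)$ into $\delta_\mathcal T(Y)$, and likewise for $g$. The partition $p_f$ of $X$ has as blocks the nonempty fibres $f^{-1}(y)$, and $p_f \leq p_g$ means every block of $p_f$ is contained in some block of $p_g$. Equivalently, $f(x)=f(x')$ implies $g(x)=g(x')$ for all $x,x'\in X$. We prove each direction by working with this pointwise characterization.

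\emph{($\Leftarrow$)} Suppose $h\circ f=g$. If $f(x)=f(x')$, then $g(x)=h(f(x))=h(f(x'))=g(x')$. Hence each fibre of $f$ lies inside a fibre of $g$, so $f\leq g$.

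\emph{($\Rightarrow$)} Suppose $f\leq g$. We define $h$ on the image $f(X)\subseteq Y$ by $h(y)=g(x)$ for any $x\in f^{-1}(y)$. This is well defined because $p_f\leq p_g$ forces all elements of the fibre $f^{-1}(y)$ to have the same $g$-value. By construction, $h(f(x))=g(x)$ for every $x\in X$, so $h\circ f=g$.

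\emph{Uniqueness.} If $h\circ f=h'\circ f=g$, then for each $y=f(x)$ in the image we get $h(y)=g(x)=h'(y)$. So $h$ and $h'$ agree on $f(X)$.

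\emph{The main obstacle} is the uniqueness claim when $f$ is not onto $\delta_\mathcal T(Y)$, i.e.\ when $f(X)\neq Y$. On $Y\setminus f(X)$ the values of $h$ are unconstrained, and $h$ must still be extended to a total function on $Y$, for instance by sending those points to an arbitrary element of the nonempty set $Z$. Uniqueness therefore holds only for $h$ restricted to $f(X)$; uniqueness on all of $Y$ requires $f$ to be surjective. I would state the lemma's uniqueness with this interpretation: $h$ is unique on the image of $f$, which coincides with $Y$ whenever the instance of $Y$ is exactly the set of values taken by $f$. Alternatively, one adds the assumption that edges are onto their targets.
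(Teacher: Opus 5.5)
Your proof is correct and follows essentially the same route as the paper: you define $h$ on $range(f)$, using $p_f \leq p_g$ for well-definedness. For the converse you argue pointwise, where the paper uses $p_f \leq p_{h\circ f} = p_g$. Uniqueness is the same pointwise argument on $range(f)$.

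Your caveat about uniqueness is justified. The paper's proof likewise only constructs $h$ on $range(f)$. Its uniqueness step also tacitly assumes that $h' \neq h$ forces a disagreement at some $y \in range(f)$. So the lemma holds as stated only for $h$ restricted to $range(f)$, or when $f$ is onto $\delta_{\mathcal T}(Y)$. It also holds in the trivial case where $\delta_{\mathcal T}(Z)$ is a singleton, so "requires $f$ to be surjective" slightly overstates the condition.
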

\proof  Suppose that $p_f \leq p_g$, and define $h(y)= g(f^{-1}(y))$ for every $y$ in the range of $f$ (i.e. in the set of values of $f$). Then $h$ is a well defined function from $Y$ to $Z$. Indeed, as $p_f \leq p_g$, every block of $p_f$ is contained in a block of $p_g$ and therefore $g(f^{-1}(y))$ is a single value in the range of $g$. It follows that $h:range(f) \to Z$ is a well defined function and that $h \circ f = g$. 

\noindent In the opposite direction, suppose that $h \circ f = g$. It follows that $p_{h \circ f} = p_g$ and as $p_f \leq p_{h \circ f}$ we have that $p_f \leq p_g$. 

\noindent Now, suppose that there is a function $h':Y \to Z$ such that $h \circ f = g'$. If $h' \ne h$ then there is $y \in range(f)$ such that $h'(y) \ne h(y)$. It follows that there is  $x \in X$ such that $f(x)= y$, therefore $h'(f(x)) \ne h(f(x))$, that is $g(x) \ne g(x)$ which is impossible as $g$ is a function.
\noindent Clearly the proof remains the same if $f, g$ and $h$ in the above lemma are values of expressions over $\mathcal T$ (since the value of an expression over $\mathcal T$ is a function).
\hfill$\Box$

\smallskip
To see an example of refinement constraint refer to Figure~\ref{Fig-1} and consider the edges $c$ and $s$ of tree $\mathcal T_3$ that associates each product with a category and a supplier. If we declare the refinement constraint $c \leq s$ this means that each product category must be provided by one and only one supplier. In other words, a refinement constraint works in much the same way as a functional dependency in the relational model. Let's see a simple example illustrating this point. Consider the following tree relating employees, departments and managers in some enterprise: 

$\mathcal T: Mgr \xleftarrow{g} Emp \xrightarrow{f} Dep$

\noindent If we impose the refinement constraint $f \le g$ then the above lemma implies that there must be a function $h: Dep \to Mgr$ which means that a department must have one and only one manager.  \smallskip

\noindent We shall not discuss constraints any further in this paper. Our intention in this section was to give a brief account of the kind of constraints that we can use in our model. A detailed discussion of constraints and their inference mechanism lies outside the scope of the present paper and it is part of our current work. \\

 \noindent We have seen so far the basic concepts of our model, namely the concepts of tree database, functional algebra and the basic constraints that can be imposed on a tree database. In the following two sections we define the two kinds of queries offered by the query language of our model, namely traversal queries and analytic queries. \smallskip
 
 \noindent We end this section by reminding the reader that, in order to simplify our discussions, we often confuse syntax and semantics, when no ambiguity is possible. Here are some examples, referring to tree $\mathcal T_7$ of Figure~\ref{Fig-1}: if we say `the function $b$' this means the function $\delta_{\mathcal T_7}(b)$; if we say `the function $r \circ b$' this will mean the function $\delta_{\mathcal T_7}(r) \circ \delta_{\mathcal T_7}(b)$; if we write $b^{-1}$ this will mean $(\delta_{\mathcal T_7}(b))^{-1}$ and so on.

\section{Operations on Trees}
In this section, we introduce operations for defining new trees from old. To do so, we `upgrade' the operations of the functional algebra to similar operations on trees. 
Before defining our operations on trees, we recall a few facts about graphs which also apply to trees. In particular, we recall the definition of a directed labeled graph as a set of triples. 

\noindent Seen as syntactic objects, the edges of a directed labeled graph are triples of the form $\langle source, label, target\rangle$, therefore two edges are different if they differ in at least one component of this triple. This implies, in particular, that two edges can have the same label if they have different sources and/or different targets. Moreover, two different edges can have the same source and the same target provided that they have different labels; we call such edges {\em parallel edges} and this definition extends to paths, which we call {\em parallel paths}. 

\noindent Following this abstract view, a set of triples can be a disconnected graph, with or without isolated nodes, with or without cycles, with or without parallel paths; and it can be a tree, a path, a fork and so on. Moreover a singleton $\{\langle X, \iota_X, X \rangle\}$ is a graph called a {\em trivial graph} and the empty set of triples is also a graph called the {\em empty graph}. Finally, the result of any set theoretic operation on graphs (seen as sets of triples) is also a graph. Motivated by these remarks, given two graphs $\mathcal G$ and $\mathcal G'$, we shall denote by $\mathcal G \sqcup \mathcal G'$ the union of their sets of triples that is: $\mathcal G \sqcup \mathcal G'= Triples(\mathcal G) \cup Triples(\mathcal G')$.




\subsection{Tree restriction}

Before we define the operation of tree restriction we recall that, in our model, the nodes of a database tree represent datasets and the edges represent total functions; and that the edges in a tree are arranged either in a path or in a fork of paths (i.e. a set of paths having the same source). 
Now, the question is: if a node in a path or in a fork is restricted to one of its subsets, how can the restriction be `propagated' to the remaining nodes and edges such as function totality is maintained? To answer this question, let $\mathcal T$ be a tree, $P$ a path in $\mathcal T$ and $F$ a fork of two paths in $\mathcal T$ defined as follows:\smallskip

$P: X \xrightarrow{f} Y \xrightarrow{g} Z$ \smallskip

$F: Z_1\xleftarrow{g_1}Y_1\xleftarrow{f_1}X \xrightarrow{f_2} Y_2 \xrightarrow{g_2} Z_2$ \smallskip

\noindent We shall answer the question separately, first for $P$ and then for $F$. Regarding $P$, we distinguish three cases of restriction propagation as follows: 
\begin{enumerate}
    \item {\em Restriction of the root} 
    
    If $V\subseteq X$ then replace $X$ by $V$, $Y$ by $f(V)$, $f$ by $f/V$ and $g$ by $g/f(V)$
    
    We refer to this action as {\em pushing $V$ to $Z$ along $P$}
    \item {\em Restriction of the target} 
    
    If $V\subseteq Z$ then replace $Z$ by $V$, $Y$ by $g^{-1}(V)$, $X$ by $f^{-1}(g^{-1}(V))$, $f$ by $f/f^{-1}(g^{-1}(V))$ and $g$ by $g/g^{-1}(V)$ 
    
    We refer to this action as {\em pulling $V$ back to $X$ along $P$}
    \item {\em Restriction of an intermediate node} 
    
    If $V\subseteq Y$ then replace $Y$ by $V$, $X$ by $f^{-1}(V)$, $Z$ by $g(V)$, $f$ by $f/f^{-1}(V)$ and $g$ by $g/V$
\end{enumerate} 

Note that, in this case, the result of restriction is independent of the order in 

which pulling back and pushing forward are performed. \smallskip 

\noindent Regarding $F$, let's call $P_1$ the path from $X$ to $Z_1$ and $P_2$ the path from $X$ to $Z_2$. We distinguish four cases as follows: 
\begin{enumerate}
    \item {\em Restriction of the source} 
    
    If $V\subseteq X$ then replace $X$ by $V$, push $V$ to $Z_1$ along $P_1$ and to $Z_2$ along $P_2$
    \item {\em Restriction of a target} (say $Z_1$) 
    
    If $V\subseteq Z_1$ then replace $Z_1$ by $V$, pull $V$ back to $X$ along $P_1$, push $f_1^{-1}(g_1^{-1}(V))$ to $Z_2$ along $P_2$  
    \item {\em Restriction of an intermediate node} (say $Y_1$)  

    If $V_1\subseteq Y_1$ then replace $Y_1$ by $V_1$; replace $g_1$ by $g_1/V_1$; 
    pull $V_1$ to $X$ along $f_1$; push $f_1^{-1}(V_1))$ to $Z_2$ along $P_2$ 
  
    \item {\em Restriction of both targets}  

    If $V_1\subseteq Z_1$ and $V_2\subseteq Y_2$ then 
    
    (a) pull $V_1$ back to $X$ along $P_1$ and let $H_1\subseteq X$ be the result of this pullback 
    
    (b) pull $V_2$ back to $X$ along $P_2$ and let $H_2\subseteq X$ be the result of this pullback 
    
    (c) push $H_1\cap H_2$ to $Z_1$ along $P_1$ and push $H_1\cap H_2$ to $Z_2$ along $P_2$ 

    Note that, in this case, as set intersection is commutative, the result of restriction is independent of the order in which the pullbacks in (a) and (b) above are performed.
\end{enumerate} 

\noindent A few remarks are in order here: 
(a) restriction propagation as described above can be extended in a straightforward manner to a path with more than two edges and to a fork with more than two paths - and this means that it can be extended to a tree; (b) restriction propagation always terminates and the result is independent of the order in which the actions of pulling and pushing are performed; 
(c) the functions represented by the edges remain total in the result of restriction propagation; (d) if the path $P$ consists of only two edges and we restrict the source then restriction propagation reduces to  restriction of function composition; and if a fork $F$ consists of only two edges and we restrict the source then restriction propagation reduces to the restriction of a function pairing. 


\begin{definition}[Tree restriction]
    Let $\mathcal T$ be a database tree with root $\rho$, let $X$ be a node of $\mathcal T$ and let $V$ be a subset of $X$. Then the restriction of $\mathcal T$ to $V$, denoted by $\mathcal T_{V\subseteq X}$ is the tree defined as follows:

\begin{enumerate}
    \item if $X$ is the root of $\mathcal T$ (i.e. if $X= \rho$)
    
    then for every leaf $L$ of $\mathcal T$, push $V$ to $L$ along the unique path from $\rho$ to $L$ 
    \item if $X$ is an intermediate node of $\mathcal T$, $P$ the unique path from $\rho$ to $X$ and $W$ the result of pulling $V$ back to $\rho$ along $P$ 
    
    then push $W$ to every leaf $L$ along the unique path from $\rho$ to $L$
    \item if $X$ is a leaf of $\mathcal T$, say $L$, and $P$ the unique path from $\rho$ to $L$ 
    
    then pull $V$ back to $\rho$ along $P$ and if $W$ is the result of this pulling back then push $W$ to every leaf $L$
\end{enumerate} 
\end{definition} 


\noindent Clearly when applying the above definition there are redundant occurrences of pulling back and pushing forward. For example, suppose $X$ is an intermediate node (item 2 in the above definition) and let $P$ be the unique path from $\rho$ to $X$. Suppose now that $X$ is the common source of two paths $P_1$ and $P_2$ leading from $X$ to leaves $L_1$ and $L_2$, respectively. 
Then pushing $W$ from $\rho$ to leaves $L_1$ and $L_2$ actually pushes $W$ to $X$ along $P$ twice (in general, as many times as there are leaves that are descendants of $X$). Hence the need for efficient algorithms for performing tree restriction. However, designing such algorithms lies outside the scope of the present paper and it is part of our current research.

\subsection{Tree composition}  
Intuitively, given two database trees $\mathcal T$ and $\mathcal T'$, we would like to `combine' them so as to create a `larger' tree thus enabling the extraction of more complex information from the database. In view of our previous discussion, the obvious candidate for doing so is the union $\mathcal T \sqcup \mathcal T'$. Unfortunately this union is not always a tree, as it may create cycles and/or parallel paths, disconnected components and so on. Therefore we must impose conditions to ensure that this union is indeed a tree.  

\begin{definition}[Composability]
 $\mathcal T$ is called {\em composable} with $\mathcal T'$ if $\mathcal T$ and $\mathcal T'$ share only one node, and moreover this node is the root of $\mathcal T'$. We call this unique shared node the {\em link} (of $\mathcal T$ to $\mathcal T'$).   
\end{definition} 


\noindent It is important to note that the link can be {\em any} node of $\mathcal T$ that is, it can be a leaf, the root or an intermediate node of $\mathcal T$. The following lemma states how two trees that are composable can be combined so as to return a single tree.
\begin{lemma} 
Let $\mathcal T$, $\mathcal T'$ be two trees in a database. If $\mathcal T$ is composable with $\mathcal T'$ then $\mathcal T \sqcup \mathcal T'$ is a tree called the {\em composition} of $\mathcal T$ with $\mathcal T'$, denoted by $\mathcal T' \circ \mathcal T$. 
\end{lemma}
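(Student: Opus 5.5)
The plan is to verify directly the three clauses of the paper's definition of a tree for the graph $\mathcal G= \mathcal T \sqcup \mathcal T'$: finiteness and distinct labels, a root with no incoming edge, and exactly one path from the root to every non-root node. Let $\rho$ be the root of $\mathcal T$, $\rho'$ the root of $\mathcal T'$, and $L$ the link, so $L=\rho'$ and the node sets satisfy $N(\mathcal T)\cap N(\mathcal T')=\{L\}$. Finiteness is immediate. The candidate root of $\mathcal G$ is $\rho$.

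First I will establish that $\rho$ has no incoming edge in $\mathcal G$. No edge of $\mathcal T$ enters $\rho$. An edge of $\mathcal T'$ entering $\rho$ would force $\rho$ to be a node of $\mathcal T'$, hence $\rho=L=\rho'$; but $\rho'$ has no incoming edge in $\mathcal T'$, a contradiction. By the same argument, every node of $\mathcal T'$ other than $L$ receives edges only from $\mathcal T'$, since any edge of $\mathcal T$ has both endpoints in $N(\mathcal T)$. Likewise, every node of $\mathcal T$ receives edges only from $\mathcal T$, except possibly $L$. But $L=\rho'$ has no incoming $\mathcal T'$-edge, so in fact each node of $\mathcal T$ receives edges only from $\mathcal T$.

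Next I will show existence and uniqueness of paths. For a node $v\in N(\mathcal T)$, the unique $\mathcal T$-path from $\rho$ to $v$ is a path in $\mathcal G$. For $v\in N(\mathcal T')\setminus\{L\}$, I concatenate the $\mathcal T$-path from $\rho$ to $L$ with the $\mathcal T'$-path from $L$ to $v$.

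For uniqueness, take any $\mathcal G$-path from $\rho$ to $v$ and trace it backwards from $v$. While the current node lies in $N(\mathcal T')\setminus\{L\}$, the preceding edge must be a $\mathcal T'$-edge. Once the current node lies in $N(\mathcal T)$, the preceding edge must be a $\mathcal T$-edge. Hence the path is a $\mathcal T$-segment from $\rho$ to $L$ followed by a $\mathcal T'$-segment from $L$ to $v$, or is entirely within $\mathcal T$ if $v\in N(\mathcal T)$. Each segment is unique by the tree property of its factor.

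I expect the main obstacle to be the distinct-labels requirement together with the fact that $\sqcup$ is a union of triples. If an edge triple occurred in both trees, both of its endpoints would be shared nodes, forcing a loop at $L$, which is impossible except for $\iota_L$. Identity edges are harmless: the two copies of $\iota_L$ coincide in the union. For ordinary labels, I will appeal to the database convention, per the definition of tree database, that all edge labels are distinct, so that label clashes cannot arise. Node labels are unique because the shared node is identified. I would also observe that the restriction "non-root nodes are simple" is preserved unless $\rho'$ is composite, but that concerns database membership rather than treeness, so it falls outside the lemma.
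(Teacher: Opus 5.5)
Your verification of the root condition and of existence and uniqueness of paths is correct and follows the same route as the paper: the root of the union is the root $\rho$ of $\mathcal T$, and paths are built by concatenating the $\mathcal T$-path from $\rho$ to the link with the $\mathcal T'$-path from the link. Your version is more complete. The paper simply asserts uniqueness. Your observation that $L=\rho'$ receives no $\mathcal T'$-edge, combined with the backward trace, is what actually rules out a path that leaves $\mathcal T$ and re-enters it. It also shows that paths to nodes of $\mathcal T$ remain unique in $\mathcal T\sqcup\mathcal T'$.

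The step that does not hold up is the one you identified as the main obstacle. The paper imposes no cross-tree distinctness of edge labels:
clause (c) in the definition of tree applies to each tree separately, and the definition of tree database says nothing about labels. The paper even states that an edge may occur in different trees `with the same or different edge labels', and that two edges with different sources or targets may carry the same label.

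So, for instance, $\mathcal T: X\xrightarrow{f}Y$ and $\mathcal T': Y\xrightarrow{f}Z$ may both belong to a database. $\mathcal T$ is composable with $\mathcal T'$ (link $Y$), and $\mathcal T\sqcup\mathcal T'$ contains two different triples labelled $f$, which violates clause (c) read literally. Hence clause (c) for the union cannot be obtained by citing the definitions. You have two options:
\begin{itemize}
\item add it as an explicit hypothesis, e.g. that the non-identity labels of $\mathcal T$ and $\mathcal T'$ are disjoint, or that clashing labels are renamed;
\item do what the paper's proof tacitly does: treat edges as identified by their triples and check only the root and unique-path conditions.
\end{itemize}
Your remark that no non-identity triple can be shared by the two trees (it would be a loop at $L$) is correct. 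It is exactly what makes the union unproblematic under the second reading.
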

\proof Let $\rho$ be the root of $\mathcal T$. As $\mathcal T$ is a tree, there is a unique path from $\rho$ to every node of $\mathcal T$, and in particular to the link node, say $X$. Now, as $X$ is the root of $\mathcal T'$, there is a unique path from $X$ to every node of $\mathcal T'$, and as $X$ is the only node shared by $\mathcal T$ and $\mathcal T'$, it follows that there is a unique path from $\rho$ to every node of $\mathcal T'$. More precisely, let $P$ be the unique path in $\mathcal T$ from $\rho$ to $X$, let $N$ be a node of $\mathcal T'$ and let $P'$ be the unique path in $\mathcal T'$ from $X$ to $N$. Then the `concatenation' of $P$ and $P'$ is the unique path from $\rho$ to $N$. It follows that there is a unique path from $\rho$ to every node of $\mathcal T$ and to every node of $\mathcal T'$, therefore $\mathcal T' \circ \mathcal T$ is a tree whose root is that of $\mathcal T$. \\
\smallskip
\\\noindent Hereafter, motivated by the proof of the above lemma, if $\mathcal T$ is composable with $\mathcal T'$, we shall call {\em composition path} the unique path that exists from the root $\rho$ of $\mathcal T$ to the link $X$ (i.e. to the root $\rho'$ of $\mathcal T'$).  
\noindent Note that the composition of two functions, seen as separate one-edge trees, $\mathcal T: X\xrightarrow{f} Y$ and $\mathcal T': Y\xrightarrow{g} Z$, is a special case of tree composition with the link being the node $Y$ and $f$ being the composition path. Therefore the use of the symbol `$\circ$' to denote the composition of two trees should create no confusion. Similarly, the pairing of two functions with the same source, seen as separate trees, $\mathcal T: Y\xleftarrow{f} X$ and $\mathcal T': X\xrightarrow{g} Z$ is a special case of tree composition with the link being the node $X$ and the identity path $\iota_X$ being the composition path. \smallskip

\noindent Now, if two trees $\mathcal T$ and $\mathcal T'$ are in the same database and $\mathcal T$ is composable with $\mathcal T'$ then the question is: how to define the instance of $\mathcal T \circ \mathcal T'$ in terms of the instances of $\mathcal T$ and $\mathcal T'$ (so as to be able to answer queries against their composition). 
Our approach to answer this question is motivated by the example of composition of two total functions seen as two separate trees, $\mathcal T: X \xrightarrow{f} Y$ and $\mathcal T': Y \xrightarrow{g} Z$, and residing in the same database. Then $\mathcal T$ is composable with $\mathcal T'$ and the link is $Y$. Now, in general, we have $\delta_{\mathcal{T}}(Y) \neq \delta_{\mathcal{T}'}(Y)$, therefore we can distinguish three possible cases for performing the composition $\mathcal{T'}\circ \mathcal{T}$ (i.e. $g \circ f$ in our example), while maintaining the totality of the functions $f$ and $g$: 
\begin{itemize}
    \item The two instances, $\delta_{\mathcal{T}}(Y)$ and $\delta_{\mathcal{T}'}(Y)$ have an empty intersection, in which case the result of the composition $g \circ f$ is the empty tree
    \item The two instances are equal, in which case we have the usual composition of two total functions 
    \item The two instances are neither disjoint nor equal. In this case, objects lying outside the intersection $I= \delta_{\mathcal{T}}(Y) \cap \delta_{\mathcal{T}'}(Y)$ are not `useful' (as they don't participate in the composition) so we can `forget' them. Therefore, we can restrict our attention to $I$, and as $I$ is a restriction of $Y$, we can pull it back to $X$ along $f$ and push it forward to $Z$ along $g$ to obtain the `useful part' of the composition that is $g'\circ f'$, where $f'= f/f^{-1}(I)$ and $g'=g/I$ are total functions. 
\end{itemize} 
\noindent The important thing to note here is that if the database contains only the two trees $\mathcal T$ and $\mathcal T'$ of our previous example then the path expression $g\circ f$ has a meaning {\ only if} expressed against the tree $\mathcal T' \circ \mathcal T$ (and its value will be $g' \circ f'$). In other words, as we mentioned earlier, composition of trees creates paths that traverse two or more trees, thus enriching the query language that we shall see in the following sections. \smallskip

\noindent Let us now add one more edge to the tree $\mathcal T$ of our example that is let us consider the following two trees: $\mathcal T: W \xleftarrow{h} X \xrightarrow{f} Y$ and $\mathcal T': Y \xrightarrow{g} Z$. Clearly the two trees remain composable, and in order to find the instance $\delta_{\mathcal T \circ \mathcal T'}$ in terms of the instances of $\mathcal T$ and $\mathcal T'$, we can follow the same reasoning as above, adding one more step to restriction propagation, namely: replace $h$ by $h'= h/f^{-1}(I)$. \smallskip

\noindent We summarize our discussion so far in the following definition of instance of tree composition.
\begin{definition}[Instance of Tree Composition]
    Let $\mathcal T,\mathcal T'$ be two trees residing in the same database with roots $\rho$ and $\rho'$, and instances $\delta_T$ and $ \delta_{T'}$, respectively. If $\mathcal T$ is composable with $\mathcal T'$ with link $X$
    then the instance $\delta_{\mathcal T \circ \mathcal T'}$ is defined as follows: 
   
\noindent {\em \bf Step 1} If $I=\delta_\mathcal T(X)\cap \delta_{\mathcal T'}(\rho')= \emptyset$ then $\delta_{\mathcal T \circ \mathcal T'}= \emptyset$ else 

\noindent {\em \bf Step 2} Pull $I$ backwards from $X$ to $\rho$ along the composition path and let $V$ be the resulting restriction of $\delta_\mathcal T(\rho)$. Then 

\noindent {\em Step 2.1} Push $\delta_\mathcal T(\rho)$ from $\rho$ to each leaf $L$ of $\mathcal T$ along the unique path from $\rho$ to $L$

\noindent {\em Step 2.2} Push $I$ from $\rho'$ to each leaf $L'$ of $\mathcal T'$ along the unique path from $\rho'$ to $L'$ 
\end{definition}

\noindent What this definition actually says is the following: to find the instance $\delta_{\mathcal T \circ \mathcal T'}$, first restrict $\delta_\mathcal T$ with $I\subseteq \delta_\mathcal T(X)$ (this is step 2.1) then restrict $\delta_{\mathcal T'}$ with $I\subseteq \delta_{\mathcal T'}(\rho')$ (this is step 2.2).
Clearly, if in step 1 $I= \delta_\mathcal T(X)= \delta_{\mathcal T'}(X)$ then no restriction propagation is needed that is the nodes and edges of $\delta_{\mathcal T \circ \mathcal T'}$ keep the values they have in $\delta_\mathcal T$ and $ \delta_{\mathcal T'}$.
Note that pulling back and pushing forward in the above definition maintains the constraint of function totality for each edge of $\mathcal T' \circ \mathcal T$ as required by our definition of tree instance (see Definition \ref{TrIn}).

\noindent As a last remark regarding tree composition, this operation is clearly non-commutative and non-associative, in general. However, in particular cases, composition can be commutative and/or associative. For example, when $\mathcal T, \mathcal T'$ share only their root 
then tree composition is commutative; and if moreover we have three trees, $\mathcal T, \mathcal T', \mathcal T''$ sharing only their root, then composition is also associative. 

\subsection{Cartesian product of trees}
We have seen that tree composition combines two trees that share a unique node (the link) so as to create a larger tree. 
On the other hand, in a tree database, we may be interested in a collection of trees which, although pairwise disjoint (i.e. no node in common) they contain related information. For instance, a tree database may contain a tree $\mathcal T_{Emp}$ describing characteristics of employees and a tree $\mathcal T_{Trans}$ describing characteristics of the company's transportation means. Although these two trees may share no node, we might want to be able to refer to their pair as subtrees of a larger tree, say $\mathcal T_{Admin}$, whose root serves as the entry point for the pair of trees. The Cartesian product serves precisely  this purpose. This operation `puts together' two disjoint trees and returns a tree having the input trees as sub-trees. It is a derived operation defined in terms of tree composition and Cartesian product of sets. 
\begin{definition}
    Let $\mathcal T$ and $\mathcal T'$ be two disjoint trees with roots $X$ and $Y$, respectively. Let $\mathcal T_X: X\times Y \xrightarrow{\pi_X} X$ and $\mathcal T_Y: X\times Y \xrightarrow{\pi_Y} Y$ be the projections of $X\times Y$ over $X$ and $Y$, respectively, seen as separate trees (with one edge each). Then the {\em product} of $\mathcal T$ and $\mathcal T'$ is a tree denoted by $\mathcal T\times \mathcal T'$ and defined by: $\mathcal T\times \mathcal T'= (\mathcal T_X\circ \mathcal T) \circ (\mathcal T_Y\circ \mathcal T')$.
The current instance of $\mathcal T\times \mathcal T'$ is computed from the current instances of $X\times Y$, $\pi_X$, $\pi_Y$, $\mathcal T_X$ and $\mathcal T_Y$, by applying the operations used in the definition of $\mathcal T\times \mathcal T'$.
\end{definition}

\noindent To see that $\mathcal T\times \mathcal T'$ is well defined, observe that $\mathcal T_X$ is composable with $\mathcal T$ (with link $X$), $\mathcal T_Y$ is composable with $\mathcal T'$ (with link $Y$) and $(\mathcal T_X\circ \mathcal T)$ is composable with $(\mathcal T_Y\circ \mathcal T')$ (with link $X\times Y$). The definition of $\mathcal T\times \mathcal T'$ is depicted in Figure \ref{Fig-5}.

\noindent Note that the Cartesian product of trees is a commutative and transitive operation over mutually disjoint trees. 
Also note that the Cartesian product of trees can be extended to non-disjoint trees if we prefix each node shared by the two trees by the name of the tree to which it belongs. For example, the trees $\mathcal T: X \xrightarrow{f} Y$ and $\mathcal T': Y \xrightarrow{g} Z$ are not disjoint as they share the node $Y$. However, if we replace $Y$ by two nodes, $\mathcal T.Y$ and $\mathcal T'.Y$ such that $dom(\mathcal T.Y)= dom(\mathcal T'.Y)$ then the two trees become disjoint and we can take their Cartesian product whose root is $(X\times \mathcal T'.Y)$ with $X\to \mathcal T.Y$ and $\mathcal T'.Y\to Z$ as its sub-trees. This `renaming' technique is well-known in relational databases, where it is used for renaming attributes which are shared by two relation schemas (in particular, when defining `joins' \cite{Ullman83}). 

\noindent As a last remark, the product $f\times g$ of two functions $f:X\to Y$ and $g:X'\to Y'$ that we have seen earlier (see Section \ref{subsec:FA}, Lemma \ref{Prod}) is a special case of the Cartesian product of trees, as stated in the following lemma. 
\begin{lemma}
Let $f: X\to Y$ and $g: X'\to Y'$ be two functions and let $\mathcal T: X\xrightarrow{f} Y$ and $\mathcal T': X'\xrightarrow{g} Y'$ be the functions $f$ and $g$ seen as trees. Let $\mathcal T_X: X\times X'\xrightarrow{\pi_X}\ X$, $\mathcal T_{X'}  : X\times X'\xrightarrow{\pi_{X'}}\ X'$ be the projections of $X\times X'$ also seen as trees. Then we have: 
$(\mathcal T\circ \mathcal T_X)\circ (\mathcal T'\circ \mathcal T_{X'})$ is the product function $f\times g$ seen as a tree. 
\end{lemma}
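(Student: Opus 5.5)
We argue in two stages: first that the composed object is the right tree, then that its value, computed with the functional algebra, is $f\times g$. The tool is Lemma~\ref{Prod}, which gives $f\times g=(f\circ\pi_X)\wedge(g\circ\pi_{X'})$. The idea is to read each tree composition in the statement as a function operation, in the way remarked after the composition lemma. Composing two one-edge trees along a path is function composition, and composing two trees at a shared root is pairing.

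\textbf{Step 1 (composability).} We check the composition conditions in the order they are used, noting that the notation $\mathcal T\circ\mathcal T_X$ is read, as in the definition of the product of trees, as $\mathcal T_X$ composed with $\mathcal T$ via the link $X$.
\begin{itemize}
\item $\mathcal T_X$ and $\mathcal T$ share only the node $X$, which is the root of $\mathcal T$. Their union is the path $X\times X'\xrightarrow{\pi_X}X\xrightarrow{f}Y$.
\item Symmetrically, $\mathcal T_{X'}$ and $\mathcal T'$ give the path $X\times X'\xrightarrow{\pi_{X'}}X'\xrightarrow{g}Y'$. Here we assume $\{X,Y\}$ and $\{X',Y'\}$ are disjoint, using the renaming technique if necessary.
\item The two paths share only their common root $X\times X'$, so they are composable with link $X\times X'$.
\end{itemize}
Their composition is therefore the fork
\[
Y\xleftarrow{f}X\xleftarrow{\pi_X}X\times X'\xrightarrow{\pi_{X'}}X'\xrightarrow{g}Y'.
\]

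\textbf{Step 2 (instances).} We compute the instance with the Definition of instance of tree composition. The instance of $X\times X'$ is taken to be $\delta(X)\times\delta(X')$, and the projections are total and surjective onto $\delta(X)$ and $\delta(X')$. So at each link the intersection $I$ is the whole link instance, and Step~1 of that definition produces no restriction. All nodes and edges keep their original values, and in particular $f$ and $g$ are not cut down.

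\textbf{Step 3 (reading off the function).} The tree obtained is a fork of two paths from $X\times X'$. By the remark after the composition lemma, a fork at a common root is the pairing of its path expressions. Its value is therefore $(f\circ\pi_X)\wedge(g\circ\pi_{X'})$, which equals $f\times g$ by Lemma~\ref{Prod}. One can confirm this pointwise: the pair $(x,x')$ is sent to $(f(x),g(x'))$. Equivalently, we may identify the fork with the single edge $X\times X'\to Y\times Y'$ that it induces on the product of its leaves.

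\textbf{Main obstacle.} The difficulty is interpretive rather than computational. First, the phrase ``$f\times g$ seen as a tree'' has to be made precise. We intend the fork whose induced pairing of leaf paths is $f\times g$, and this identification must be stated explicitly rather than left implicit. Second, Step~2 needs the projection instances to be surjective so that no restriction occurs at the links. I would state both points as conventions at the start, justifying them by the Cartesian-product paragraph and by the function-as-tree remarks made after the composition lemma.
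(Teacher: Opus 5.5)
Your proposal is correct and follows essentially the same route as the paper: check composability at the links $X$, $X'$ and then $X\times X'$, observe that composing the two paths at their common root $X\times X'$ is their pairing $(f\circ\pi_X)\wedge(g\circ\pi_{X'})$, and identify this with $f\times g$. The paper leaves the appeal to Lemma~\ref{Prod} implicit, and it also omits your instance-level check in Step~2 and your disjointness/renaming assumption; these are added care on your part, not a different argument.
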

\proof First observe that $\mathcal T_X$ is composable with $\mathcal T$ (with link $X$) and $\mathcal T_{X'}$ is composable with $\mathcal T'$ (with link $X'$). Moreover, $\mathcal T\circ \mathcal T_X$ is composable with $\mathcal T'\circ \mathcal T_{X'}$ (with link $X\times X'$). As $X\times X'$ is the common root of $\mathcal T\circ \mathcal T_X$ and $\mathcal T'\circ \mathcal T_{X'}$, it follows that $(\mathcal T\circ \mathcal T_X)\circ (\mathcal T'\circ \mathcal T_{X'})= (\mathcal T\circ \mathcal T_X)\wedge (\mathcal T'\circ \mathcal T_{X'})$ which is the product function $f\times g$ seen as a tree. \smallskip

\subsection{Tree projection}\label{subsec:TP}
Restriction, composition and Cartesian product are tree operations each taking as input one or more trees and creating a new tree. In the opposite direction, it is desirable to be able to `decompose' a given tree into two composable subtrees such that their composition reconstructs the original tree.
For example when designing a tree database one may want to `experiment' different ways of configuring the information by either composing two or more trees in order to create a larger tree or by decomposing a tree into smaller subtrees so that the information in the database is more focused and easier to grasp. To this end we need operations that take as input a tree $\mathcal T$ and a node $X$ of $\mathcal T$ and return a subtree of $\mathcal T$ rooted at $X$. Therefore we introduce two such operations, `tree projection' and `star tree partition'.

\begin{definition}[Tree projection]
 Let $\mathcal T$ be a tree and $X$ a node of $\mathcal T$. Then the {\em projection} of $\mathcal T$ on $X$, denoted by $\pi_X(\mathcal T)$, is the subtree of $\mathcal T$ defined as follows: 
\begin{itemize}
    \item the set of nodes consists of $X$ and all descendants of $X$ in $\mathcal T$ 
    \item the set of edges consists of all edges of $\mathcal T$ connecting nodes in the following set: 
    
    $\{X\}\cup \{Y/ Y \mbox{ is a descendant of $X$ in } \mathcal T\}$ 
\end{itemize}
\end{definition} 

\noindent To see that $\pi_X(\mathcal T)$ is indeed a subtree of $\mathcal T$ it is sufficient to note that (a) there is a unique path from the root of $\mathcal T$ to every node of $\pi_X(\mathcal T)$ (because of item 1 above), (b) there is a unique path from the root of $\mathcal T$ to $X$ and (c) every node of $X$ is a descendant of $X$ in $\mathcal T$. Therefore there is a unique path from $X$ to every node of $\pi_X(\mathcal T)$. Note that if $X$ is the root of $\mathcal T$ then $\pi_X(\mathcal T)= \mathcal T$ and if $X$ is a leaf of $\mathcal T$ then $\pi_X(\mathcal T)$ is a trivial subtree (i.e. it consists of the node $X$ and its identity edge $\iota_X$). Also note that the use of $\pi_X$ to denote both, projection of the Cartesian product of sets and tree projection should create no confusion as the function $\pi_X$ applies on arguments of different type in each case (Cartesian products of sets in the first case versus trees in the second case). \smallskip

\noindent Now, it is not difficult to see that what remains of $\mathcal T$ if we remove from it all edges of $\pi_X(\mathcal T)$ is also a subtree of $\mathcal T$. To see this, note that there is a unique path from the root of $\mathcal T$ to $X$ and also from the root of $\mathcal T$ to every node not in $\pi_X(\mathcal T)$. We shall refer to this subtree as the {\em complement} of $\pi_X(\mathcal T)$ and if we set $T_X= \pi_X(\mathcal T)$ then we shall denote its complement by  $\overline{\mathcal T}_X$. Therefore we have: $\overline{\mathcal T}_X= triples(\mathcal T)\setminus triples(\mathcal T_X)$. Figure \ref{Fig-5}(b) shows the projection of a tree and its complement. 

\noindent It is important to note that (a) $T_X$ and $\overline{\mathcal T}_X$ are composable (with the node $X$ as the link) and (b) they `make up' the whole tree $\mathcal T$ in the sense that $\overline{\mathcal T}_X\circ T_X= \mathcal T$. Therefore, using tree projection, we can decompose a tree $\mathcal T$ into two composable subtrees as stated formally in the following definition.

 \begin{figure}
{
\begin{center}
\includegraphics[width=350px,keepaspectratio]{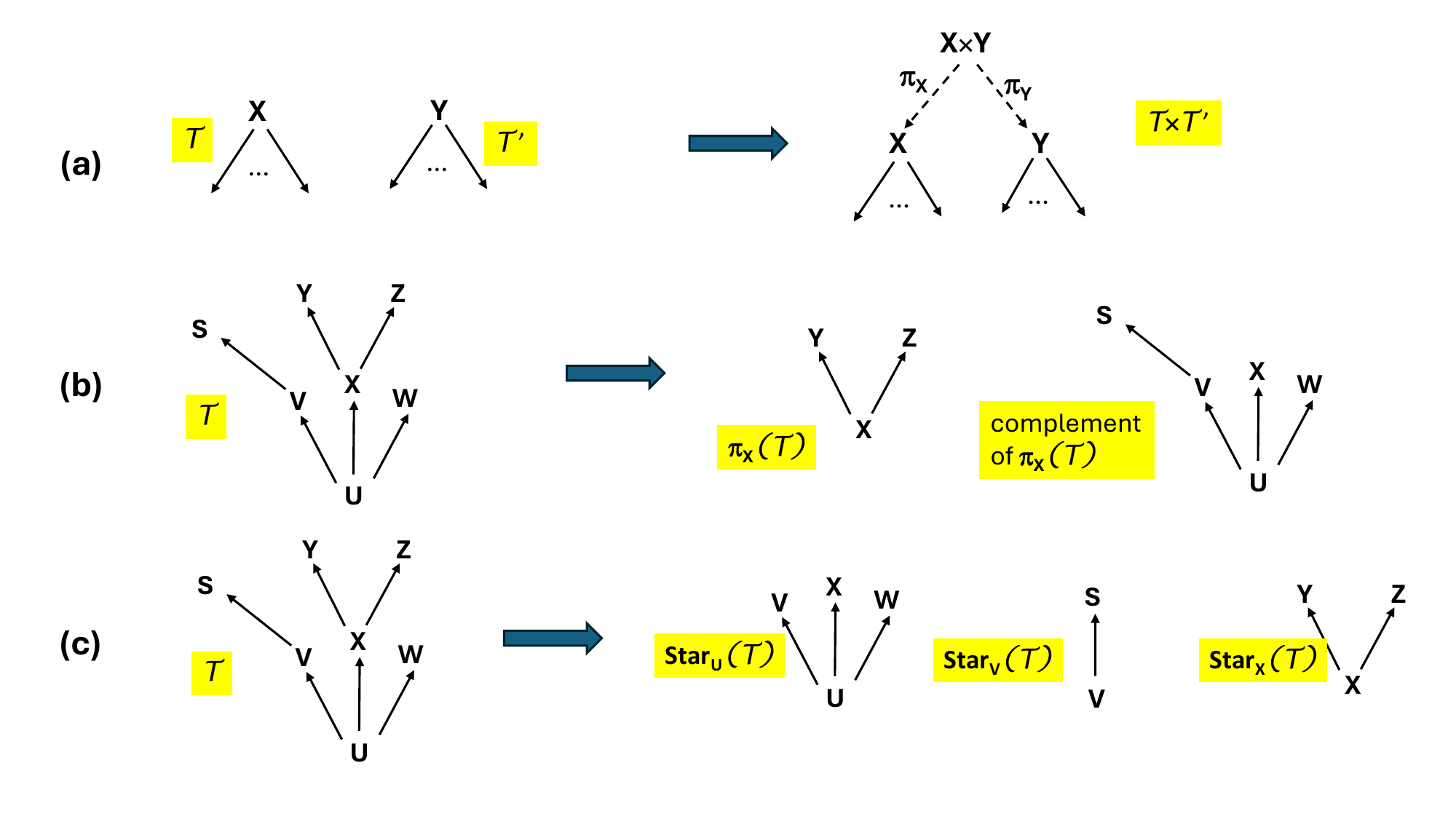}
\caption{(a) Product of trees  (b) Projection of a tree on a node $X$ (c) The star subtrees of of $\mathcal T$ \label{Fig-5}}
\end{center}
}
\end{figure}

\begin{definition}[Tree projection]
Let $\mathcal T$ be a tree and $X$ a node of $\mathcal T$. 
Then the pair $(\mathcal T_X, \overline{\mathcal T}_X)$ is called the {\em decomposition} of $\mathcal T$ by projection on $X$.
\end{definition} 
 

\noindent Now, given an instance $\delta_\mathcal T$ of $\mathcal T$, the question is what are the instances of the components in a decomposition of $\mathcal T$ in terms of $\delta_\mathcal T$. Roughly speaking, the instances of the components are the components of $\delta_\mathcal T$. More formally we have:
\begin{definition}
    Let  $(\mathcal T_X, \overline{\mathcal T}_X)$ be a decomposition of a tree $\mathcal T$ and let $\delta_\mathcal T$ be an instance of $\mathcal T$. Then we have:

    \noindent - if $N$ of $\mathcal T_X$ then $\delta_{\mathcal T_X}(N)= \delta_\mathcal T(N)$ and if $e$ is an edge of $\mathcal T_X$ then $\delta_{\mathcal T_X}(e)= \delta_\mathcal T(e)$ 
    
    \noindent - if $N$ is a node of $\overline{\mathcal T}_X$ then $\delta_{\overline{\mathcal T}_X}(N)= \delta_{\mathcal T}(N)$ and if $e$ is an edge 
    of $\overline{\mathcal T}_X$ then $\delta_{\overline{\mathcal T}_X}(e)= \delta_{\mathcal T}(e)$
\end{definition} 

\noindent Suppose now that, in a database, we have a tree $\mathcal T$ with instance $\delta_\mathcal T$ and we want to replace it by the two trees of the decomposition $(\mathcal T_X, \overline{\mathcal T}_X)$, where $X$ is a node of $\mathcal T$. Then we have two cases: 

\noindent - if we want to let the two trees evolve independently during database updates then we may have $\delta_{\mathcal T_X}(X)\ne \delta_{\overline{\mathcal T}_X}(X)$

\noindent - if we want to be able to `recover' $\delta_\mathcal T$ (by composition) in every database instance then we must maintain the set-theoretic constraint: $\delta_{\mathcal T_X}(X)= \delta_{\overline{\mathcal T}_X}(X)$ \\

\noindent To illustrate the above statement consider the database tree: $\mathcal T: X\xrightarrow{f} Y\xrightarrow{g} Z$

\noindent If we decompose $\mathcal T$ by projection on $Y$ then we obtain the decomposition $(\mathcal T_Y, \overline{\mathcal T}_Y)$, where $\mathcal T_Y= g$ and $\overline{\mathcal T}_Y= f$. Now, if we replace $\mathcal T$ by this decomposition then: 

\begin{itemize}
    \item if we do not impose any constraint then at some point in time we may have: 

$\delta_{\mathcal T_Y}(Y)\ne \delta_{\overline{\mathcal T}_Y}(Y)$ hence $\delta_\mathcal T\ne \delta_{\mathcal T_Y}(g)\circ \delta_{\overline{\mathcal T}_Y}(f)$

\item else if we impose the set theoretic constraint $\delta_{\mathcal T_Y}(Y)= \delta_{\overline{\mathcal T}_Y}(Y)$ 

then we will always have $\delta_\mathcal T=\delta_{\mathcal T_Y}(g)\circ \delta_{\overline{\mathcal T}_Y}(f)$
\end{itemize}







    


\subsection{Star tree partition}\label{subsec:ST}
We recall first that a star tree is a tree in which all leaves are (immediate) successors of the root. Now, given any tree $\mathcal T$, the root of $\mathcal T$ together with the set of edges from the root to its successors constitutes a star tree; and this holds for each successor of the root and, recursively, for all nodes of $\mathcal T$ (with the leaves of $\mathcal T$ being the roots of trivial star trees). Given a tree $\mathcal T$ and a node $X$ of $\mathcal T$, we shall denote by $Star_X(\mathcal T)$ the star subtree of $\mathcal T$ rooted at node $X$. Figure \ref{Fig-5}(c) shows a tree $\mathcal T$ and all its star subtrees. It is not difficult to see that we can reconstruct $\mathcal T$ from its star subtrees. For example, referring to Figure \ref{Fig-5}(c), we have: $T= Star_X(T)\circ (Star_V(T)\circ Star_U(T))$. Actually, any tree $\mathcal T$ can be seen as the composition of the star subtrees defined by its (non-leaf) nodes.
\smallskip 

\noindent Now, a star tree can be partitioned into two star trees by partitioning its set of edges as expressed in the following definition.
\begin{definition}[Star tree partition]
 Let $\mathcal T$ be a star tree with root $\rho$, let $L$ the set of leaves of $\mathcal T$ and let $S$ be a set of leaves of $\mathcal T$ (i.e. $S\subseteq L$). Then the {\em partition} of $\mathcal T$ on $S$, denoted by $Part_S(\mathcal T)$ is the star tree with root $\rho$ and with the following set of edges:
 $\{\rho\to \lambda/ \lambda\in S\}$.  
\end{definition}

\noindent As in the case of projection, partitioning applied to a star tree produces a decomposition of the star tree into two star trees: $Part_S(\mathcal T)$ and $Part_{L\setminus S}(\mathcal T)$, the latter called the {\em complement} of $Part_S(\mathcal T)$. Moreover, given an instance $\delta_\mathcal T$ of $\mathcal T$ the instances of $Part_S(\mathcal T)$ and its complement are computed in a similar way as in the case of tree projection. 

\subsection{Tree algebra}\label{subsec:TA}
We have seen so far several operations on trees. In the rest of the paper, we shall refer to the set of all these operations as the {\em tree algebra} and to a well formed expression using these operations as a {\em tree expression}. 

\begin{definition}[Tree algebra and Tree expression]
Let $\mathcal D$ be a tree database. 
\begin{itemize}
    \item 
The {\em tree algebra} of $\mathcal D$ is defined to be the set of the following operations on trees: restriction, composition, projection and star tree partition (the Cartesian product of trees being a derived operation). 
\item A {\em tree expression} over $\mathcal D$ is either a tree of $\mathcal D$ or a well formed expression $TE$ whose operands are trees of $\mathcal D$ and whose operations are among those of the tree algebra. The value of a tree expression $TE$, denoted by $Val_{TE}$, is the tree obtained by performing the operations in $TE$; and given the current instance of $\mathcal D$, the current instance of $Val_{TE}$ is obtained in the way that we have seen earlier in this section. 
\end{itemize} 
\end{definition}
\noindent To illustrate the concept of tree expression refer to Figure \ref{Fig-1} and consider the tree expression $T_2\circ T_1$. Its value is the tree $T_5$ and the (current) instance of $T_5$ is computed from the instances of $\mathcal T_1$ and $\mathcal T_2$ as explained for tree composition earlier in this section. As another example, referring to Figure \ref{Fig-5}(c), the following tree expressions have both the same value, namely the tree $\mathcal T$: \smallskip 

\noindent $Star_X(T)\circ (Star_V(T)\circ Star_U(T))$ ~~ and ~~ $Star_X(T)\circ (Star_U(T)\circ Star_V(T))$ \smallskip

\noindent By the way, two expressions (such as the above) having the same value are called {\em equivalent}. 

\noindent Actually, a tree expression $TE$ over a tree database $\mathcal D$ can be seen as a `tree query' over $\mathcal D$, whose answer is the value $Val_{TE}$ of $TE$.

\noindent Note that the operations of the tree algebra allow for what we could call `tree games': given two databases, $\mathcal D$ and $\mathcal D'$, use the operations of the tree algebra to produce $\mathcal D'$. For example, referring to Figure \ref{Fig-1} and starting with the database $\mathcal D= \{T_1, T_2, T_3\}$ we can produce the database $\mathcal D'= \{T_5, T_6\}$ using the following tree expressions: $\mathcal T_5= \mathcal T_2\circ \mathcal T_1$ ~and~ $\mathcal T_6= \mathcal T_3\circ \mathcal T_1$ 

\noindent Conversely, starting with $\mathcal D'$ we can produce $\mathcal D$ using the following tree expressions:

$\mathcal T_1= Star_{Inv}(T_6)$ ~~and ~ $\mathcal T_2= \pi_{Branch}(\mathcal T_5)$

\noindent By the way, two databases, $\mathcal D$ and $\mathcal D'$, such that starting with $\mathcal D$ we can produce $\mathcal D'$, and conversely, starting with $\mathcal D'$ we can produce $\mathcal D$, are called {\em structurally equivalent}. \smallskip

\noindent We note in passing that two extreme cases of such `tree games' are (a) when $\mathcal D$ is any database and $\mathcal D'$ contains a single tree and (b) when $\mathcal D$ contains a set of one-edge trees (i.e. a set of edges called {\em tokens}) and $\mathcal D'$ contains a single tree.  The first case is related to the `universal relation' concept in relational databases \cite{Ullman83}), a concept that motivated a considerable amount of research work as well as much controversy during the late 1970s and 1980s over whether database queries and design should be based on a single, hypothetical table containing every attribute (the `Universal Relation') or traditional normalized tables \cite{Mendelzon}.  

\noindent The second case is related to puzzles, where one tries to compose tokens in an effort to produce a single tree - which is usually an image. 
We shall come back to these remarks when discussing tree database design in Section \ref{sec: Conclusions}.


\smallskip


\noindent As a last remark, the operations of the tree algebra contribute to increase the expressive power of our model in at least three significant ways: 

\noindent First, as we shall see in the following sections, by creating larger trees, the operations of composition and Cartesian product enrich the query language by offering the possibility of expressing new queries traversing two or more trees. 

\noindent Second, by decomposing a large tree into smaller trees, tree decomposition allows to focus on smaller parts of the database. 

\noindent Third, as we shall see in the following subsection, using these operations we can define quite sophisticated views of a tree database tailored to the needs of individual users or user groups.

\subsection{Views}\label{subsec:Views}

A tree expression $TE$ over a tree database $\mathcal D$ can be seen as a `tree query' over $\mathcal D$, whose answer is the value $Val_{TE}$ of $TE$. The concept of view was introduced in the early days of relational databases \cite{Ullman83} and originated from the need of individual users or groups of users wanting to use only part of the information contained in the database, and moreover tailored to their specific needs. Formally, a view of a relational database is a pair $(\mathcal V, RE)$, where $\mathcal V$ is the view name and $RE$ is a relational algebra expression called the `view definition'. Such a view actually defines a relational table whose name is $\mathcal V$ and whose content is the value of $RE$ \cite{Ullman83}. Motivated by the concept of relational view, we introduce the following definition of view of a tree database.


\begin{definition}[View of a Tree Database]\label{subsec: Views}
 Let $\mathcal D$ be a tree database. A {\em view} of $\mathcal D$ is defined to be a pair $(\mathcal V, TE)$, where $\mathcal V$ is the view name and $TE$ is a tree expression over $\mathcal D$, called the {\em view definition}.
 \end{definition} 


 \noindent To illustrate this definition, consider the tree database $\mathcal D$ of Figure \ref{Fig-2}. The pair $(\mathcal V, \pi_{Prod}(\mathcal T_3)\circ(T_2\circ T_1))$ is a view of $\mathcal D$ whose name is  $\mathcal V$ and whose definition is the expression $T_2\circ T_1$. Incidentally, as the trees $\mathcal T_2$ and $\mathcal T_3$ are disjoint, we have: $\pi_{Prod}(\mathcal T_3)\circ (\mathcal T_2\circ \mathcal T_1)= \mathcal T_2 \circ (\pi_{Prod}(\mathcal T_3)\circ \mathcal T_1)$; and in fact this property holds in general, as stated formally in the following lemma, whose proof follows immediately from the definition of composition.
\begin{lemma}
    Let $\mathcal D$ be a tree database and $\mathcal T_1, \mathcal T_2, \mathcal T_3$ three trees of $\mathcal D$ such that (a) $\mathcal T_1$ is composable with $\mathcal T_2$ and with $\mathcal T_3$ and (b) $\mathcal T_2$ and $\mathcal T_3$ are disjoint. Then we have: $\mathcal T_3\circ (\mathcal T_2\circ \mathcal T_1)= \mathcal T_2 \circ (\mathcal T_3\circ \mathcal T_1)$.
\end{lemma}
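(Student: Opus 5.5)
Both sides are sets of triples, with instances defined by restriction propagation. The plan is to prove equality at two levels: the graph (as a set of triples) and the instance.

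\emph{Well-definedness and structure.} First I check that both sides are defined. Let $X_2$ and $X_3$ be the links of $\mathcal T_1$ to $\mathcal T_2$ and to $\mathcal T_3$, that is, the roots of $\mathcal T_2$ and $\mathcal T_3$, both nodes of $\mathcal T_1$. Since $\mathcal T_2$ and $\mathcal T_3$ are disjoint, every node of $\mathcal T_2\circ\mathcal T_1$ is either a node of $\mathcal T_1$ or a node of $\mathcal T_2$ other than $X_2$. So the only node it shares with $\mathcal T_3$ is $X_3$, which is the root of $\mathcal T_3$. Hence $\mathcal T_2\circ\mathcal T_1$ is composable with $\mathcal T_3$, and symmetrically $\mathcal T_3\circ\mathcal T_1$ is composable with $\mathcal T_2$. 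By the composition lemma both sides are trees. Their triple sets are both $Triples(\mathcal T_1)\cup Triples(\mathcal T_2)\cup Triples(\mathcal T_3)$, because $\sqcup$ is associative and commutative. So the two sides coincide as graphs, with root $\rho$, the root of $\mathcal T_1$.

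\emph{Instances.} Put $I_2=\delta_{\mathcal T_1}(X_2)\cap\delta_{\mathcal T_2}(X_2)$ and $I_3=\delta_{\mathcal T_1}(X_3)\cap\delta_{\mathcal T_3}(X_3)$. Let $H_2,H_3\subseteq\delta_{\mathcal T_1}(\rho)$ be the pullbacks of $I_2,I_3$ to $\rho$ along their composition paths. I will show that both sides yield the instance obtained by restricting $\mathcal T_1$ at its root to $H_2\cap H_3$ and pushing forward. On the $\mathcal T_2$ part, the pushed set is the image of $H_2\cap H_3$ at $X_2$; similarly for the $\mathcal T_3$ part. This is the two-target case of restriction propagation on a fork, whose stated order-independence rests on the commutativity of $\cap$.

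To verify this for the left side, note that the first composition restricts $\mathcal T_1$ to $H_2$. The second composition then intersects at $X_3$: pulling $I_3$ back within the already restricted tree gives $H_2\cap H_3$ at the root. The paths from $\rho$ to $X_2$ and to $X_3$ are handled separately, and since pullbacks commute with intersection, the $\mathcal T_2$ part is re-restricted consistently. The right side is symmetric, and the two results agree because $H_2\cap H_3=H_3\cap H_2$. If either intersection is empty, both sides are the empty tree.

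\emph{Main obstacle.} The hardest step is the instance argument, because the paper's instance definition for composition is informal. Two points need care. The second composition must use the restricted instance of $X_3$, not the original one. And re-restricting must not disturb the $\mathcal T_2$ subtree beyond what the pushforward from $H_2\cap H_3$ prescribes. I would settle both by appealing to remark (b) after restriction propagation: the result of propagation is independent of the order of pulls and pushes, so each side equals the single simultaneous restriction described above.
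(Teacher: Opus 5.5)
Your proposal is correct. Its structural half is exactly what the paper's one-line justification (``follows immediately from the definition of composition'') amounts to. Disjointness of $\mathcal T_2$ and $\mathcal T_3$ makes $\mathcal T_2\circ\mathcal T_1$ composable with $\mathcal T_3$, and $\mathcal T_3\circ\mathcal T_1$ composable with $\mathcal T_2$. Both sides are then the single triple set $Triples(\mathcal T_1)\cup Triples(\mathcal T_2)\cup Triples(\mathcal T_3)$. The paper never addresses instances, so your second half proves more than the paper does. Its conclusion is right: the root is restricted to $H_2\cap H_3$, and each attached subtree is pushed from the image of $H_2\cap H_3$ at its link.

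One step in that second half rests on the wrong identity. Write $P_3$ for the function of the path from $\rho$ to $X_3$. The left side's second composition pushes $I_3'=P_3(H_2)\cap I_3$ into $\mathcal T_3$. Identifying this with $P_3(H_2\cap H_3)$ needs $f(A\cap f^{-1}(B))=f(A)\cap B$. The fact that preimages commute with intersection gives something else, namely the new root value $H_2\cap H_3$.

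The same identity handles the $\mathcal T_2$ part on the right. It also shows that the case $I_2,I_3\neq\emptyset$ but $H_2\cap H_3=\emptyset$ gives the empty tree on both sides, which your case split omits. The $\mathcal T_2$ part on the left needs no identity at all. Step 2.1 of the second composition re-pushes the new root set $H_2\cap H_3$ from $\rho$ through $X_2$, overwriting the set $I_2$ installed by the first composition.

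Once this computation is written out, the appeal to remark (b) is unnecessary. It would not suffice on its own anyway: that remark concerns propagation within a single tree, whereas composition also intersects two independently given instances at the link.
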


As a last remark on views, we have seen earlier that, using tree projection or star tree partition it is possible to decompose a tree into two or more subtrees and this possibility is quite welcome. Indeed, using decompositions when defining a view, a view user may wish to focus on smaller trees serving specific information needs. \smallskip

 \begin{figure}
{
\begin{center}
\includegraphics[width=350px,keepaspectratio]{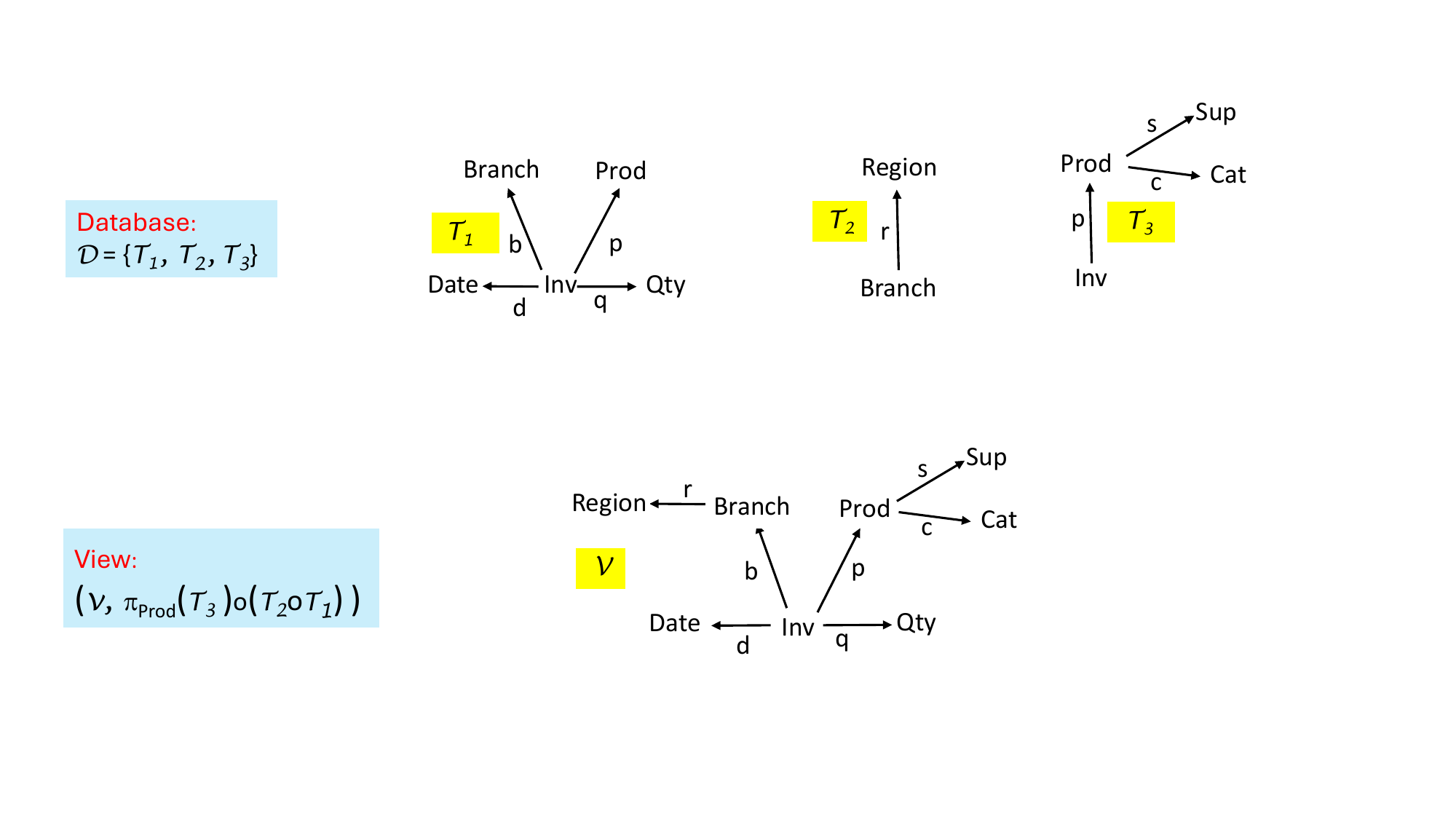}
\caption{Example of view \label{Fig-2}}
\end{center}
}
\end{figure}


\section{Traversal Queries}\label{sec:TQ}

As explained informally in the introduction, our model offers two kinds of queries, traversal queries and analytic queries; and these queries are always expressed over a tree, which is either a tree present in the database or a tree derived from other trees using a tree expression.  Therefore, in our approach, the user interacts with the tree database at two levels: (a) querying the database to extract a desired tree and (b) querying the extracted tree using one of two kinds of queries, traversal queries or analytic queries. 
In this section, given a tree $\mathcal T$, we present the definition of a traversal query over $\mathcal T$, and in the following section we present the definition of an analytic query over $\mathcal T$. \smallskip

\noindent A traversal query is a type of query used to explore and navigate a tree by traversing its nodes and edges, starting from a specific node or a set of nodes and following the connections between nodes based on defined criteria. 
The definition of a traversal query is based on that of path expression over a tree, a concept that we have seen in section \ref{subsec:FA}. We recall that a path expression is the composition of all edges along a path; and that there is a one-one correspondence between paths in a tree and path expressions: given a path, the corresponding path expression is the composition of its edges, and conversely, given a path expression the corresponding path is the sequence of edges in the expression in reverse order. Therefore we can use a path to specify a path expression, unambiguously; and we can use a path expression to specify a path, unambiguously. Clearly, the source and target of a path expression are the source and target of the corresponding path.

\begin{definition}[Traversal query over a tree]\label{section:TQ} 
\noindent Let $\mathcal D$ be a tree database and $\mathcal T$ a tree, which is either in $\mathcal D$ or derived from $\mathcal D$ using a tree expression. A {\em traversal query} $Q$ over $\mathcal T$ is defined to be either a single path expression over $\mathcal T$ or the pairing of two or more path expressions having the same source $S$ and distinct targets $A_1, \ldots , A_n$, that is: 

\noindent $Q= PE_1\wedge \ldots \wedge PE_n$, where $target(PE_1)= A_1, \ldots, target(PE_n)= A_n$, $n\ge 1$ 

\noindent The {\em answer} of $Q$ in an instance $\delta_\mathcal T$ of $\mathcal T$ is denoted by $Ans_Q(\delta_\mathcal T)$ and it is defined to be the pairing of the values of $PE_1, \ldots, PE_n$ in $\delta_\mathcal T$ that is:

$Ans_Q(\delta_\mathcal T)= Val_{PE_1}(\delta_\mathcal T)\wedge \ldots \wedge Val_{PE_n}(\delta_\mathcal T)$

\noindent We shall denote a traversal query also by $Q(S, PE_1, \ldots , PE_n)$ when we want to state explicitly that the common source of its expressions is $S$; and we shall refer to $S$ as the {\em source} of the query. Moreover, we shall denote the answer simply
by $Ans_Q$ when $\delta_\mathcal T$ is understood. 
\end{definition}
Clearly, as the value of each $PE_i$ is a function from $S$ to $A_i$, say $f_i: S\to A_i$, the answer of $Q$ is the pairing $f_1\wedge \ldots\wedge f_n$ (i.e. a function from $S$ to $A_1\times \ldots \times A_n$). For example, referring to tree $\mathcal T_7$ of Figure~\ref{Fig-1}, if $Q= (s\circ p)\wedge (c\circ p)$ then $Ans_Q$ is a function from $Inv$ to $Sup\times Cat$. We shall refer to $A_1\times\ldots \times A_n$ as `the target' of the query and to each $A_i$ as `a target' of the query. \smallskip

\noindent Note that the identity edge of each node is a traversal query (also called the {\em identity query} of the node). Typically, identity queries are used when we want to `read' the content of a node $X$. For example, the identity query $Q=\iota_{Sup}$ will return the set $\{(x, x) / x \in \delta_{\mathcal T}(Sup)\}$ from which we can `extract' the (current) set of suppliers.\smallskip

\smallskip
\noindent Traversal queries have a number of important features that we summarize below: 
 \begin{enumerate}
     \item The answer to a traversal query can be represented as an easy-to-read table. Indeed, if $A_1, \ldots, A_n$ are the targets of the query, then the answer can be represented in a table such that: the rows are indexed by the values of $S$, the columns are indexed by $A_1, \ldots, A_n$, and for each $s$ in $S$, the value $f_i(s)$ is entered in the cell $(s, A_i)$. 
     
     We recall that, in the relational model \cite{Ullman}, a table where the entries are atomic values and in which the values in each column are functionally dependent on the key-values is said to be in `First Normal Form'. Moreover, if we view each expression $PE_i$ of a traversal query as a functional dependency (from the source $S$ of the query to the target $A_i$ of $PE_i$) then the table just defined satisfies all dependencies $f_i: S\to A_i$. Therefore, the answer of a traversal query can be seen as a relation whose key is $S$ (the source of the query) and whose attributes are the targets  $A_1, \ldots, A_n$ of the path expressions defining the query. 

\item Each path expression $PE_i$ of a traversal query over a tree $\mathcal T$ corresponds to the unique path from the source $S$ to the target $A_i$ of the query. Therefore a traversal query $Q$ over $\mathcal T$ can be specified as follows: (a) choose a node $S$ of $\mathcal T$ as the source of $Q$, (b) choose $n$ distinct descendants $A_1, \ldots, A_n$ of $S$ in $\mathcal T$ as the targets of $Q$, and (c) give restrictions (optionally) for $S$ and for each $A_i$. Then the query $Q$ will be the pairing of the (possibly restricted)
path expressions thus defined. This implies that one can use the select-from-where pattern of  SQL in order to define a traversal query $Q$: \smallskip

{\bf select} $\langle A_1, \ldots, A_n\rangle$ {\bf from} $\langle tree-expression \rangle$ {\bf where} $\langle conditions \rangle$ \smallskip 

Here, the root of the tree defined by the tree expression in the {\em from} clause (call it $\mathcal T$) is the source of the query; the nodes $A_1, \dots, A_n$ are the targets of the query; and the conditions in the {\em where} clause define (optionally) restrictions of the nodes $S, A_1, \dots, A_n$.   
Clearly, before query evaluation, the system will have to verify whether the nodes $A_1, \dots, A_n$ given in the {\em select} clause are distinct descendants of node $S$ in $\mathcal T$. For example, referring to Figure \ref{Fig-1} the following statement defines a traversal query over the tree $\mathcal T_7$:  \smallskip

{\bf select} Sup {\bf from} $\pi_{Prod}(\mathcal T_7)$ {\bf where} $(Prod= X)$ and $(Cat= Y)$   \smallskip

Here, the tree expression is the the projection of $\mathcal T_7$ on $Prod$ and the statement defines a traversal query over the tree $\pi_{Prod}(\mathcal T_7)$, returning the set of all suppliers for product $X$ and category $Y$.

\item As a traversal query $Q$ over a tree $\mathcal T$ is the pairing of $n$ path expressions $PE_1, \dots, PE_n$  with a common source $S$ and targets $A_1, \dots, A_n$,  its answer, $Ans_Q$, can be seen as a set of tuples over `attributes'  $S, A_1, \ldots, A_n$, where $A_1= target(PE_1), \ldots, A_n= target(PE_n)$. For example, consider  the query $Q= (r \circ b)\wedge (s \circ p)$ over the tree ${\mathcal T_7}$ of Figure \ref{Fig-1}. The source of $Q$ is $Inv$, the expression $r \circ b$ has $Region$ as target, and  the expression $s \circ p$ has $Sup$ as target. The answer to this query can be seen as a set of tuples over the `attributes' $Inv, Region, Sup$. In other words, each traversal query $Q(S, PE_1, \ldots, PE_n)$ {\em induces} a relation schema $R_Q(S, A_1, \ldots, A_n)$, and for every instance $\delta_{\mathcal T}$ of $\mathcal T$, the answer $Ans_Q(\delta_{\mathcal T})$ {\em induces} a relation $r_Q$ over $R_Q$. 

\noindent Moreover, as $\pi_{A_i} \circ Ans_Q(\delta_{\mathcal T})= f_i$ is a function from $S$ to $A_i$, $i= 1, \ldots, n$, the relation $r_Q$ {\em satisfies} the functional dependencies $f_1, \ldots, f_n$ (thus making the source $S$ of $Q$ the key of $R_Q$). It follows that there is a mapping from traversal queries $Q$ over $\mathcal T$ to relation schemas $R_Q$ such that: (a) for each instance $\delta_{\mathcal T}$, the answer $Ans_Q(\delta_{\mathcal T})$ induces a relation $r_Q$ over $R_Q$ satisfying the functional dependencies $f_1, \ldots, f_n$, and (b) the values of each non-key attribute $A_i$ of $R_Q$ are computed by the expression $PE_i$ having $A_i$ as its target. 
In other words, each expression $PE_i$ of $Q$ provides the {\em semantics} of attribute $A_i$ of $R_Q$, $i= 1, \ldots, n$; and the pairing of the path expressions in the query provides the semantics of the induced relation schema $R_Q$. Moreover, this semantics is unique as there is one and only one path from the source of the query to each of its targets (put differently, there can't be two parallel paths between the source and a target of the query).  
We shall come back to this point in section \ref{subsec:Interface}, where we discuss the definition of a relational database as a set of traversal queries over a tree database. 
\item In view of our discussion in the previous item, in our model, we consider functional dependencies as `first class citizens' for building queries, while relations are derived concepts. This is in sharp contrast with the relational model, in which relations are considered as `first class citizens' and functional dependencies are considered as `constraints' that the relations must satisfy. Put differently, in the relational model, functional dependencies play a static role, whereas in our model they play a dynamic role in the sense that it is the functional dependencies that actually {\em build-up} consistent relations.
\end{enumerate}
Before ending this section, a word on query evaluation. The evaluation of a traversal query can be done directly, based on the definitions of the operations of the functional algebra that appear in the query; or indirectly, either by translating the query as an equivalent query in some query engine (e.g. by translating it as an SQL query) or by rewriting the query.

\noindent The kind of rewriting that we envisage in our model is the process of applying a number of transformations to the original query in order to produce an equivalent optimized one (`optimized' in the sense that the equivalent query allows to reuse previously obtained query results). Such transformations do not depend on the physical state of the system (such as the size of physical structures, the system workload, etc). They are just well-defined rules that specify how to transform a query expression into a logically equivalent one \cite{DBLP:reference/db/Pitoura18d}. 
Here, by `equivalence' between two queries we mean the following: two traversal queries $Q$ and $Q'$ over a tree $\mathcal T$ are equivalent, denoted as $Q \equiv Q'$, if they return the same answer in every database instance. For example, referring to the tree $\mathcal T_7$ of Figure \ref{Fig-1}, the queries $Q= (s\circ p)\wedge (c\circ p)$ and $Q'= (s\wedge c)\circ p$ are equivalent (as composition distributes over pairing). Yet, the first needs two compositions and one pairing for its evaluation whereas the second needs one composition and one pairing. Such rewriting is obviously based on properties of the functional algebra. However,
a complete account of traversal query rewriting lies outside the scope of the present paper and it is part of our current research. 

\section{Analytic queries}\label{sec:AQ}




In order to give a general definition of analytic query over a tree we proceed as for traversal queries that is: we first define what an analytic expression is and then we define an analytic query to be either an analytic expression or the pairing of two or more analytic expressions with the same source. In doing so we build upon our previous work on data analytics \cite{SpyratosS18}\cite{spyratos2023context}.
First, let us see how an analytic expression is defined and evaluated using the tree $\mathcal T_1$ of Figure \ref{Fig-1} as an example. \smallskip

\noindent Suppose that we want to know the total quantity delivered to each branch. This computation needs the following items: 

\noindent - the edges (functions) $b$ and $q$ of $\mathcal T_1$ that return the branch and the quantity appearing on each invoice, respectively; 

\noindent - and the aggregate operation $sum$ (to sum-up the quantities delivered in each branch). 

\noindent The triple $AE=(b, q, sum)$ is an example of what we call an `analytic expression' and `the total quantity delivered by branch' is what we call the `value' of $AE$ in $\mathcal T_1$, denoted by $Val_{AE}(\mathcal T_1)$. \smallskip

\noindent Figure \ref{AE}(a) shows a toy example, where the data set $Inv$ consists of seven invoices, numbered 1 to 7, with each invoice associated with a branch and a quantity by the functions $b$ and $q$, respectively. In order to find the value of $AE$ we proceed in three steps as follows: 

\noindent{\em Grouping}: During this step we group together all invoices referring to the
same branch (i.e. we invert the function $b$). The results of grouping are shown in Figure~\ref{AE}(a) (e.g. invoices 1, 2 refer to branch $Br_1$).




\noindent {\em Measuring}: In each group of the previous step, we find the quantity corresponding to each invoice in the group, using the function $q$ (e.g. the quantities corresponding to $Br_1$ are 200 and 100).




\noindent {\em Aggregation}: In each group of the previous step, we sum up the quantities found (e.g. for branch $Br_1$ we find 200 + 100= 300).




\noindent Then the association of each branch to the corresponding total quantity is the desired result (and this result is the function $Val_{AE}(\mathcal T_1)$ shown in Figure \ref{AE}(a)):

$Br_1 \to 300$, $Br_2 \to 600$, $Br_3 \to 600$ \smallskip



\noindent We view the ordered triple $AE= (b, q, sum)$ shown in Figure \ref{AE}(a) as an analytic expression over $\mathcal T_1$, the function $Val_{AE}: Branch \rightarrow TotQty$ in that same figure as the value of $AE$, and the computations shown in the figure as the expression evaluation process.
Note though that what makes the association of branches to total quantities possible is $(1)$ the fact that the functions $b$ and $q$ have the same source (namely, $Inv$ in this example) and $(2)$ that `sum' is an operation applicable on $q$-values. Also note that {\em TotQty} is a new symbol (i.e. not appearing in the analytic expression) which is necessary for naming the target of the computed function $Val_{AE}(\mathcal T_1)$. 

\noindent The function $b$ that appears first in the triple $(b, q, sum)$ and used in the grouping step is called the {\it grouping function}; the function $q$ that appears second in the triple is called the {\it measuring function}, or the measure; and the function $sum$ that appears third in the triple is called the {\it aggregate operation}. Actually, the triple $(b, q, sum)$ should be regarded as the specification of a data analysis task to be carried out over the data set $Inv$. 

\noindent Note that by exchanging the first two components of this triple we obtain the triple $(q, b, sum)$ which is {\em not} a well formed analytic expression as, although $b$ and $q$ have the same source, the aggregate operation is not applicable on b-values (as we can't sum up branches). However if instead of `sum' we put `count' as the aggregate operation then we obtain the analytic expression $(q, b, count)$ which {\em is} a well formed analytic expression, as `count' is an aggregate operation applicable on $b$-values. By the way, what this analytic expression returns is the number of branches per delivered quantity. \smallskip




\begin{figure}
{
\begin{center}
\includegraphics[width=350px,keepaspectratio]{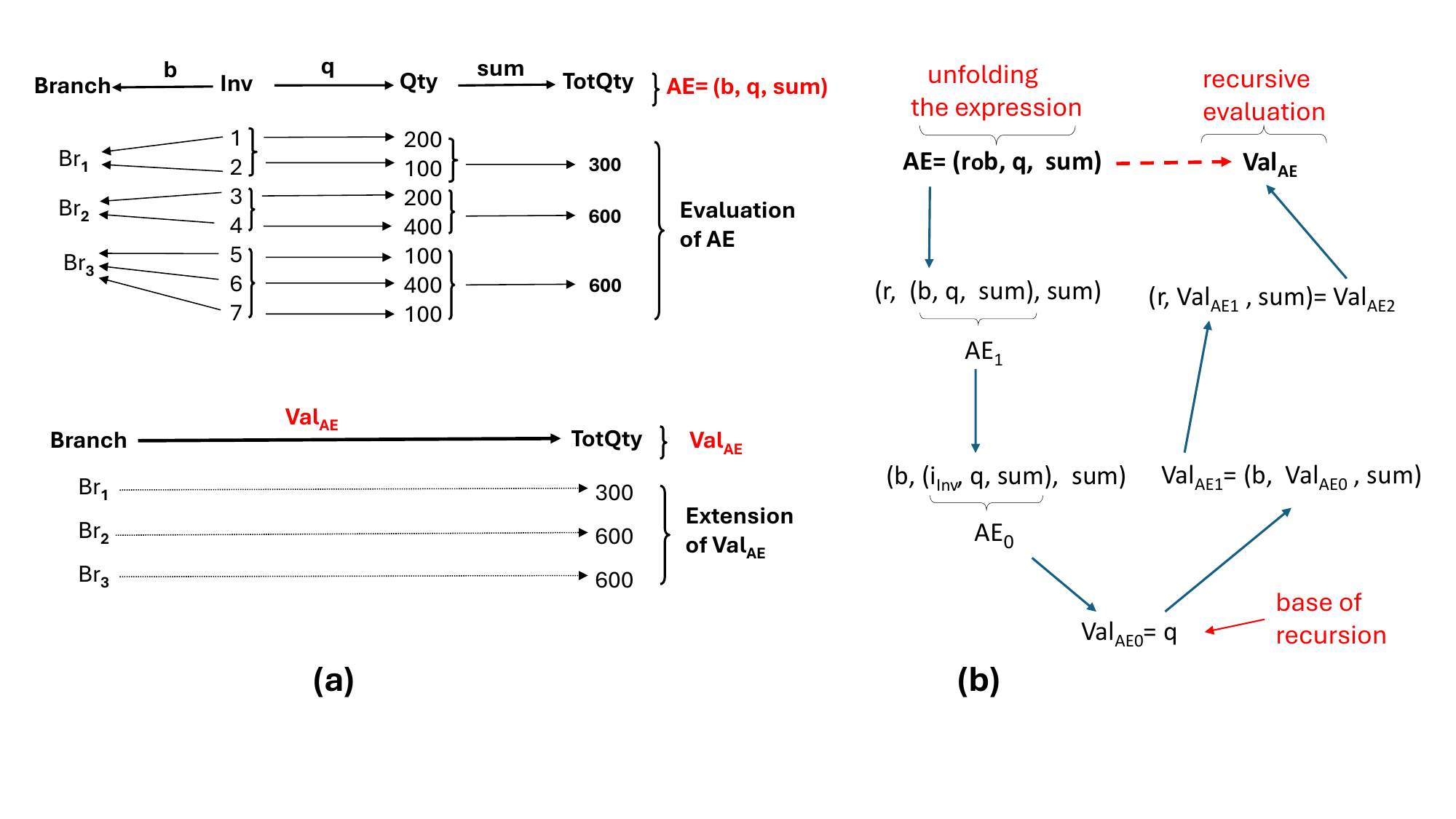}
\caption{(a) Evaluation of an analytic expression (b) Recursive evaluation of analytic expression\label{AE}}
\end{center}
}
\end{figure} 

\noindent Now, the concept of analytic expression as defined in the previous examples was first introduced in \cite{SpyratosS18}\cite{spyratos2023context} (although it is called `analytic query' in those works). In the present work, we extend the concept of analytic expression in several ways: 

\noindent (a) we introduce the use of traversal queries for performing grouping and measuring, (b) we define an analytic query to be either an analytic expression or the pairing of two or more analytic expressions with the same source, (c) we introduce a two-level rewriting for analytic queries, and (d) we introduce the concept of `mixed query' as the pairing of path expressions and analytic expressions all having the same source.

\subsection{The formal definition of analytic expression} In an analytic expression $(g, m, op)$, as defined informally in the previous example, the grouping and measuring functions are seen as abstract functions (or as edges of a database tree). On the other hand, as we have seen in the previous section, the answer to a traversal query is a function, therefore the (answers of) traversal queries can be used in the definition of an analytic query either as grouping or as measuring functions. This simple observation allows for a significant extension of the concept of analytic expression compared to the one introduced in \cite{SpyratosS18}\cite{spyratos2023context}. Indeed, by using traversal queries to define the grouping and measuring functions, a user has the possibility to define analytic expressions representing specific user needs. These observations motivate a semantically richer definition of analytic expression. 

\begin{definition}[Analytic expression]
    An {\em analytic expression} over a tree $\mathcal T$ is a triple $AE=(g, m, op)$, where $g$ and $m$ are traversal queries over $\mathcal T$  having the same source, and $op$ is an aggregate operation applicable on the target of $m$. 
    
    \noindent Given an instance $\delta_\mathcal T$ of $\mathcal T$, the {\em value} of $AE$, denoted $Val_{AE} (\delta_\mathcal T)$, or simply $Val_{AE}$ when $\delta_\mathcal T$ is understood, is the following function:

    \smallskip\noindent $Val_{AE}: range(g) \to target(op)$ 
    
    \noindent such that: $Val_{AE}(i)= op(m/(g^{-1}(i)))$, for all $i\in range(g)$ 

    \smallskip\noindent where $m/(g^{-1}(i))$ denotes the restriction of $m$ to the subset $g^{-1}(i)$ of its source 

     \smallskip\noindent The traversal queries $g$ and $m$ are called {\em grouping query} and {\em measuring query}, respectively, and they can be any traversal queries over $\mathcal T$.
\end{definition}

\noindent As in the case of a path expression, an analytic expression $AE=(g, m, op)$ has a source and a target. The source of $AE$ is the common source of the grouping and measuring queries, $g$ and m, and the target of $AE$ is the target of the aggregate operation $op$.  

\noindent Moreover, as in the case of a path expression, an analytic expression can be defined using  the `select/from/where/having/group-by' pattern of SQL as follows (see also section \ref{sec:TQ} , item 2): \smallskip

\noindent {\bf Select} $A_1, \ldots, A_m, $  {\bf op}($B_1, \ldots, B_n$) {\bf as} {\em Res}

\noindent {\bf From} $\langle tree-expression \rangle$

\noindent {\bf Where} $\langle conditions \rangle$

\noindent {\bf Group by}  $( A_1, \ldots, A_m)$  

\noindent  {\bf Having} $\langle conditions \rangle$ \smallskip

\noindent This statement should be interpreted as follows: 
\begin{enumerate}
\item The root (say $S$) of the tree defined by the tree expression in the {\em from} clause (call this tree  $\mathcal T$) is the common source of the grouping and measuring queries. 
\item The nodes $A_1, \dots, A_m$ in the {\em select} clause are the targets of the grouping query; $B_1, \dots, B_n$ are the targets of the measuring query; and $Res$ is a name (provided by the user) representing the target of the answer. 
\item The conditions in the {\em where} clause define (optionally) restrictions of the nodes $S, A_1, \dots, A_n, B_1, \dots, B_n$.  
\item The list of attributes in the {\em group-by} clause contains the targets of the grouping query (there may eventually be more than one list each containing a subset of the set $\{A_1, \ldots, A_m\}$ in case one wants to evaluate more than one analytic expression). 
\item   The conditions in the {\em having} clause define a restriction of  $Res$ (i.e. a restriction of the target of the value of the analytic expression). 
\end{enumerate}
\smallskip

\noindent To illustrate the definition of an analytic expression following the SQL pattern, consider the example  of  Figure \ref{AE}(a), where the evaluation of the analytic expression  $(b, q, sum)$ returns the totals by $Branch$, without any conditions on the grouping and measuring  queries. Its SQL-like definition is the following: \smallskip

\noindent {\bf Select} $Branch$  {\bf sum}($Qty$) {\bf as} {\em BrTot}

\noindent {\bf From} $\mathcal T_1$

\noindent {\bf Where} -

\noindent {\bf Group by}  $Branch$   

\noindent {\bf Having}  -  \smallskip

\noindent Here $BrTot$ on the {\em select} clause is the name of the result and the symbol `-' in the {\em Where} and {\em Having} clauses means no conditions are given. \smallskip

\noindent As yet another example consider the analytic expression  $(r\circ b, q, sum)$ over the database of Figure \ref{Fig-1}(a),  whose value is the total quantity by region without any conditions on the grouping and measuring  queries.  Its SQL-like definition is as follows:  

\smallskip
\noindent {\bf Select} $Region$  {\bf sum}($Qty$) {\bf as} {\em RegTot}

\noindent {\bf From} $\mathcal T_2\circ \mathcal T_1$ 

\noindent {\bf Where} -

\noindent {\bf Group by}  $Sup$  

\noindent {\bf Having} -  \smallskip

\noindent Here the symbol $RegTot$ on the {\em Select} clause is the name of the query result. \smallskip

\noindent An interesting remark regarding an analytic expression is that there are several possibilities of restriction. Here are a few examples:  

\begin{enumerate}
    \item We can restrict the grouping and/or the measuring query to a subset of the source and/or to a subset of a target of the query. 
    \item We can restrict the analytic expression itself. Indeed, as the value of an analytic expression $AE$ is a function, we can restrict it to a subset $D$ of its source (or to a subset of its target and then `pull' the restriction  back to its source). For example, consider the analytic expression $AE= (p, q, sum)$ over the tree $\mathcal T_1$ of  Figure~\ref{Fig-1}, which returns the totals by product. Its value is a function from $Prod$ to $Totals$, and if we define $D= \{x \in Prod / Val_{AE}(x) \leq 1000\}$ then the restricted value, $Val_{AE}/D$, will contain only products for which the total is less than or equal to 1000.   
    \item We can use any function which is applicable on the source of $Val_{AE}$ to define quite complex restrictions of $Val_{AE}$.  For example, consider the analytic expression: $AE= (s, c, count)$ over the tree $\mathcal T_3$ of  Figure~\ref{Fig-1}, which returns the number of categories supplied by each supplier. Its value is a function from the set $Sup$ of suppliers to a set $CountCat$ of integers, and if we define $D= \{x\in Sup / Val_{AE}(x) > Avg(Ans_Q(Sup))\}$ then the restricted value $Val(AE)/D$ will contain only suppliers supplying more categories than the average number of categories supplied by a supplier. 
\end{enumerate} 
We note that the kinds of restriction described in items 2 and 3 above are not possible to implement as relational algebra queries (although possible to implement as SQL group-by queries through the `Having' clause). So a relevant question here is: how do the possibilities of restriction offered by our model compare with those offered by the `Having' clause of SQL? However interesting, the answer to this question lies outside the scope of the present paper.

\noindent A few remarks are in order here explaining in what ways our definition of analytic expression generalizes significantly the one given in \cite{SpyratosS18}\cite{spyratos2023context}. First, as we shall see through examples shortly, by allowing grouping and measuring to be done by the results of traversal queries, we gain access to a much larger set of meaningful functions for grouping and measuring (`meaningful' in the sense that they are defined through queries expressing user needs). Second, as we shall see shortly, we can benefit from the combination of rewriting rules for path expressions with rewriting rules for analytic expressions so as to obtain a richer rewriting system for analytic queries. Third, by pairing two or more analytic expressions into a single analytic query, we need to partition the common source of the grouping queries only once and then apply the measuring query of each analytic expression separately. \smallskip

\noindent It is important to note that, in an analytic expression, the common source of the grouping and measuring queries can be any node - not necessarily the root of a database  tree. For example, consider the tree $\mathcal T_7$ of Figure~\ref{Fig-1} and the following two analytic expressions over $\mathcal T_7$: 

\noindent $(s, c, count)$ returning the number of product categories supplied by each supplier   

\noindent $(c, s, count)$ returning the number of suppliers supplying each product category  \smallskip 

\noindent The common source of the grouping and measuring queries in these analytic expressions is the node $Prod$ and not $Inv$. \smallskip 

\noindent Also note that, given a database tree $\mathcal T$, we can use the identity query $\iota_X$ of any node $X$ as the grouping or measuring query in an analytic expression over $\mathcal T$. Here is an example, referring to tree $\mathcal T_7$ of Figure~\ref{Fig-1}: 

$AE= (\iota_{Inv}, q, sum)$  \smallskip

\noindent During the evaluation of $AE$, in the grouping step, the function $\iota_{Inv}$ puts each invoice of $Inv$ in a  singleton block. Therefore summing up the values of $q$ in a group simply finds the value of $q$ on the single invoice in that group; then the measuring step simply returns this value of $q$. It follows that, for each invoice $i$, $Val_{AE}(i)$ returns $q(i)$. Therefore $Val_{AE}= q$. \smallskip

\noindent As another example over the same tree consider the following:

$AE= (q, \iota_{Inv}, count)$ 

\noindent During the evaluation of $AE$, in the grouping step, the traversal query $q$ groups together in a single group all invoices having the same delivered quantity; then, during the measuring step, the traversal query $\iota_{Inv}$ makes no changes in the group; and finally, during the aggregation step, the aggregate function $count$ counts the invoices in each group. It follows that, for each value $x$ of $Qty$, the value $Val_{AE}(x)$ is the number of invoices having $x$ as the delivered quantity (i.e. the answer is the number of invoices by quantity delivered). \smallskip

\subsection{Enriching the class of analytic expressions} In several occasions in practice one may need the answer to `summary queries' referring to the whole  period of data collection. Here are a few examples: 

what is the total quantity of products delivered (during the whole period of data collection)

what is the number of product categories present in the database 

what is the set (list) of suppliers present in the database

\noindent In order to be able to answer such queries,  we propose to `augment' the notion of tree as follows:
\begin{definition}
    Let $\mathcal T$ be a database tree. Then the augmented $\mathcal T$, denoted by $\mathcal T^{a}$, is the tree defined as follows: 
    \begin{itemize}
        \item for each node $X$ of $\mathcal T$ define an edge $\sigma_X: X\to \Sigma_X$, where $\Sigma_X$ is a new node such that $dom(\Sigma_X)$ is a singleton;
        \item the edge $\sigma_X$ is called the {\em constant edge} of $X$ 
    \end{itemize}
\end{definition}



\noindent Clearly, given a tree $\mathcal T$, we can use the constant edge $\sigma_X$ of each node $X$ of $\mathcal T^{a}$ in the same way as any other edge. In particular, we can use it in defining analytic expressions. Referring to the tree $\mathcal T_7$ of Figure~\ref{Fig-1}, here is an example: 

$AE= (\sigma_{Prod}, c, count)$

\noindent During the evaluation of $AE$, in the grouping step, the grouping query $\sigma_{Prod}$ groups all products in a single group; then the measuring query $c$ associates each product with a category; and finally, during the aggregation step, the aggregate operation $count$ finds the number of categories of all products delivered.

\smallskip
\noindent Here are two more examples involving the identity query and the constant query: 
\begin{itemize}
    \item $(\sigma_{Inv}, \iota_{Inv}, count)$ returns the number of all invoices of node $Inv$    \\
Note that the analytic expression $(\sigma_X, \iota_X, count)$ is typically used for finding the cardinality of a node $X$.\\
\item $(\sigma_{Inv}, q, sum)$ returns the total of all quantities delivered \\
Note that the analytic expression $(\sigma_X, m, op)$ is typically used for reducing the whole of $X$ under some measuring query $m$ and operation $op$.
\end{itemize} \smallskip

\noindent It is important to note that combinations of analytic expressions such as the above can be used in several special cases such as when we want to compute percentages. For example, in the tree $\mathcal T_1$ of Figure~\ref{Fig-1}, suppose we want to compute, for each branch, the percentage  of the total quantity of  products delivered. In this case, we use the following two analytic expressions: 
\begin{itemize}
    \item $AE_{Br}= (b, q, sum)$, returning the total quantity delivered by branch 
    \item $AE_{Tot}= (\sigma_{Inv}, q, sum)$, returning the total of all quantities delivered 
\end{itemize} 

\noindent By dividing now the value of $AE_{Br}$ on a specific branch $Br_i$ by the value returned by $AE_{Tot}$ we find the percentage for each branch $Br_i$ (and the set of percentages for all branches can be visualized for example in the form of a pie). \smallskip

\noindent Regarding the use of both $\iota_X$ and $\sigma_X$ in analytic expressions, we note that, for any edge $e: A \rightarrow B$, we have:

\begin{itemize}
  \item $\iota_B \circ e= e\circ \iota_A = e$ 
  \item $\sigma_B \circ e= \sigma_A$ 
\end{itemize}

\subsection{Evaluation of analytic expressions}
In principle, the evaluation of an analytic expression can be done in four different ways: 

\begin{enumerate}
\item Directly, as described in Figure \ref{AE}(a); 
\item By translating the analytic expression and evaluating it by some query engine; 
\item By sharing computations when evaluating a set of queries with the same grouping query; 
\item By rewriting the analytic expression so as to put it in some desired form. 
\end{enumerate} 

\noindent Direct evaluation consists essentially in programming the steps described in Figure \ref{AE}(a). Clearly, direct evaluation is inefficient for analyzing big data sets. \smallskip

\noindent A better option is translating the analytic expression and evaluating it by some query engine thus taking advantage of optimization techniques implemented in the query engine. An analytic expression as defined here can be translated as an SQL group-by query, when processing relational data \cite{SpyratosS18}; as a MapReduce job, when processing data residing in a file system \cite{SpyratosS18,ZervoudakisKSP21}; and as a SPARQL query when processing RDF data \cite{PapadakiST21}.  Moreover, the possibility of translating an analytic expression as a query in three different kinds of query engines makes it possible to use the tree database model as a `mediator' \cite{DBLP:journals/csur/Wiederhold95}. This means that a user formulates an analytic expression over the tree database; the expression is then translated to a query over the underlying query engine; and finally the user receives the answer `transparently', that is, as if the analytic expression were processed by the tree database. \smallskip

\noindent Sharing computations when evaluating two or more analytic expressions is a third option. Consider the following set of analytic expressions:

$\mathcal {AE}= \{(g, m_1, op_1), \ldots, (g, m_n, op_n)\}$ 

\noindent The expressions in this set share the same  grouping query $g$. Clearly, in this case, the result of the grouping step will be the same for each analytic expression $AE_i, i= 1, \ldots, n$. Therefore the set of analytic expressions can be rewritten using the `optimized' form $\mathcal {AE}'= \{(g, \{(m_1, op_1), \ldots, (m_n op_n))\}$, meaning that the grouping step is performed only once and then the measuring and aggregation steps for each analytic expression is performed to obtain the results of the analytic expressions in $\mathcal {AE}$. Moreover, if the analytic expressions in the set share both, the grouping {\em and} the measuring query then the set $\mathcal {AE}$ can be rewritten using the `optimized' form: 

$\mathcal {AE}'=(g, m, \{op_1, \ldots, op_n\})$ 

\noindent This time, the grouping {\em and} measuring steps are performed only once and then the aggregation step for each analytic expression is performed to obtain the results of the analytic expressions in $\mathcal {AE}$. \smallskip

\noindent To illustrate these optimizations, refer to the tree $\mathcal T_7$ of Figure~\ref{Fig-1} and consider the following set of two analytic expressions:

$\mathcal {AE}= \{(b, q, sum), (b, s\circ p, count)\}$.

\noindent The first asks for the total quantity delivered by branch and the second asks for the number of suppliers by branch. Then the optimized form for the set $\mathcal {AE}$ is the following:  

$\mathcal {AE}'= (b, \{(q, sum), (s\circ p, count)\})$.

\noindent Next, consider the following set of two analytic expressions: 

$\mathcal {AE}= \{(b, q, sum), (b, q, avg) \}$. 

\noindent The first asks for the total quantity delivered by branch and the second for the average quantity delivered by branch. Then the optimized form is:

$\mathcal {AE}'= (b, q, \{sum, avg\})$.

\noindent \subsection {Rewriting an Analytic Expression}\label{RAE}

\noindent In this section we discuss the last option for evaluating an analytic expression that is by rewriting the expression. Rewriting an analytic expression $AE= (g, m,op)$ can be done at two orthogonal levels: (a) rewriting the grouping query $g$ and/or the measuring query $m$, and (b) rewriting the analytic expression itself in terms of other analytic expressions. Rewriting the grouping or the measuring query can be done as outlined in section \ref{sec:TQ}. Rewriting the analytic expression in terms of other analytic expressions depends on the form of the grouping and measuring queries as well as on the kind of aggregate operation used in the analytic expression. More precisely, if the aggregate operation is associative \cite{LaurentS11}, then the following is the basic rule for rewriting an analytic expression \cite{SpyratosS18}: \smallskip

{\bf {\em Composition Rule}} :  $(g' \circ g, m, op)= (g', (g, m, op), op)$ \smallskip

\noindent What this rule says is that if the grouping query is the composition of two other queries then the analytic expression can be rewritten as a nested analytic expression. To understand this nesting, consider the following analytic expression over the tree $\mathcal T_7$ of Figure \ref{Fig-1}: \smallskip

$AE= (r \circ b, q, sum)$, asking for the totals by region \smallskip

\noindent Here is why the composition rule works: the edge $r$ tells us in which region each branch is located, therefore, if we have the totals for each branch in a region then we can sum them up to find the total for that region. Now, to find the totals by branch we can use the analytic expression $AE'= (b, q, sum)$; and as the answer to $AE'$ has the same source as $r$ we can use the analytic expression  $AE''= (r, Val_{AE'}, sum)$ to compute the totals by region. In other words, the expression $AE$ for computing the totals by region can be rewritten as $AE= (r, Val_{AE'}, sum)$. By replacing now $Val_{AE'}$ by $(b, m, sum)$ we obtain that  $AE= (r, (b, q, sum), q)$, which should be read as follows: first evaluate the inner or nested expression $(b, m, sum)$ to obtain the totals by branch; and then use its value as the measure to evaluate the outer expression thus obtaining the totals by region.\smallskip 

\noindent It is important to note that the composition rule works only if the aggregate operation $op$ is associative \cite{LaurentS11}. Most common operations such as `sum', `min', `max' etc. are associative, for example, $sum(1, 2, 3, 4, 5)=sum(sum(1, 2), sum(3, 4, 5))$. In contrast, `avg' is not associative, for example: 

$avg(1, 2, 3, 4, 5) \neq avg(avg(1, 2), avg(3, 4, 5))$)

\noindent Although there are `corrective' algorithms allowing the use of many non-associative operations such as average, median, etc. (see for example~ \cite{ZervoudakisKSP21}\cite{LaurentS11}), we shall not pursue this subject any further here. Rather, in order to simplify the discussion, we shall assume that all aggregate operations are associative.\smallskip

\noindent The composition rule allows for the recursive evaluation of analytic expressions as shown in Figure \ref{Fig-2}(b), where the expression $AE= (r \circ b, q, sum)$, asks for totals by region. We first unfold $AE$ using the composition rule until we reach the base expression $AE_0$; then we evaluate the base expression and use its value to compute the value of the expression $AE_1= (b, q, sum)$; and finally we use the value of $AE_1$ to compute the value of $AE$. Note how the identity query $\iota_{Inv}$ is indispensable in defining the base expression $AE_0$, based on the equality: $b= b \circ \iota_{Inv}$. Indeed, it follows from this equality that: 

\noindent $AE_1= (b, q, sum)= (b \circ \iota_{Inv}, q, sum)= (b, (\iota_{Inv}, q, sum), sum)$ 

\noindent Therefore we have:  $AE_0= (\iota_{Inv}, q, sum)$ and $Val(AE_0)= q$ 

\smallskip
\noindent Using the composition rule  we can sometimes avoid making (relational) joins during the evaluation of an analytic expression. To see how, consider again the expression $AE= (r \circ b, q, sum)$ and suppose that the answers of $r$ and $b$ are stored on separate tables, say $R$ and $B$, respectively. The direct evaluation of $AE$ requires a join between $R$ and $B$ in order to find the answer to the grouping query, whereas if we use the rewritten form $AE'= (r, (b, q, sum), sum)$, no join is needed since the grouping functions for the inner and outer expressions are stored separately, in $R$ and $B$, respectively. Another advantage of using the composition rule is the possibility of incremental evaluation of analytic expressions when processing big data sets \cite{SpyratosS18}\cite{ZervoudakisKSP21}. \smallskip

\noindent Now, the composition rule says how to rewrite an analytic expression in terms of other analytic expressions, no matter whether we rewrite or not its grouping and measuring queries. Let's now see an example of rewriting an analytic query at both levels: first rewriting its grouping and/or measuring query to obtain some desired (equivalent) form; then rewriting the resulting analytic expression using the composition rule. For example, consider the following analytic expression:

$((s \circ p) \wedge (c \circ p), q, sum)$  asking for the totals by supplier and category
 
\noindent Here, by applying lemma \ref{BasicLemma} to the grouping query $(s \circ p) \wedge (c \circ p)$ we can rewrite it as $(s \wedge c) \circ p$, so we can now rewrite the analytic expression as follows:

$((s \wedge c) \circ p, q, sum)$

\noindent Next, we apply the composition rule to obtain the final, rewritten expression:

$((s \wedge c), (p, q, sum), sum)$

\smallskip
\noindent As this example shows, using the composition rule together with rewriting of the grouping and/or measuring query, we can derive a variety of rewriting rules. A most notable example is what we call the `pairing rule'. 

\noindent Consider an analytic expression of the form $AE= (g_1 \wedge g_2, m, op)$, where the grouping query is the pairing of two queries, $g_1$ and $g_2$. Suppose now that we want to evaluate the expression $AE_1=(g_1, m, op)$. It follows from Lemma  \ref{BasicLemma} that $g_1= \pi_{target(g_1)} \circ (g_1 \wedge g_2)$, therefore $AE_1$ can be rewritten as follows:  ~$AE_1=(\pi_{target(g_1)} \circ (g_1 \wedge g_2), m, op)$ 

\noindent Applying now the composition rule, we can further rewrite $AE_1$ as follows:

$AE_1= (\pi_{target(g_1)} \circ (g_1 \wedge g_2), (g_1 \wedge g_2, m, op), op)$

\noindent Following the same reasoning for the expression $AE_2= (g_2, m, op)$ we have:

$AE_2= (g_2, m, op)= (\pi_{target(g_2)} \circ (g_1 \wedge g_2), (g_1 \wedge g_2, m, op), op)$ 

\noindent Hence the following rule for rewriting an expression $AE_i= (g_i, m, op)$ in terms of the expression $AE= (g_1 \wedge \ldots \wedge g_n, m, op)$, $n \geq 2$ : \\

\noindent {\bf {\em Pairing Rule}}: \smallskip

$AE_i= (g_i, m, op)= (\pi_{target(g_i)} \circ (g_1 \wedge \ldots  \wedge g_n), (g_1 \wedge \ldots  \wedge g_n, m, op), op)$ \\


\noindent To see how the pairing rule works, refer to the tree $\mathcal T_7$ of Figure \ref{Fig-1} and suppose that we have computed the totals by branch and product. Then we can re-use this result to compute the totals by branch and the totals by product in terms of the totals by branch and product. Indeed, applying the pairing rule we have:

\noindent $(b, q, sum)= (\pi_{Branch} \circ (b \wedge p), (b \wedge p, q, sum), sum)$ 

\noindent $(p, q, sum)= (\pi_{Product}\circ (b \wedge p), (b \wedge p, q, sum), sum)$.

\subsection{Analytic queries}
In the previous subsection we defined and discussed analytic expressions and their evaluation. In this subsection we define analytic queries by proceeding in a similar manner as for traversal queries that is we define an analytic query to be either a single analytic expression or the pairing of two or more analytic expressions with the same source. 
As we already mentioned, in our model, these two types of queries are defined using the {\em same} formal framework, namely the functional algebra. In contrast, in the relational model, it is not possible to define analytic queries in the relational algebra; they are defined outside the relational algebra, in the form of SQL group-by queries.

\begin{definition}[Analytic query]
\noindent Let $\mathcal D$ be a tree database and $\mathcal T$ a tree in $\mathcal D$. An {\em analytic query} $Q$ over $\mathcal T$ is defined to be either a single analytic expression $AE$ over $\mathcal T$ or the pairing of two or more analytic expressions with the same source $S$ and distinct targets that is: 

$Q= AE_1\wedge AE_2\wedge \ldots \wedge AE_n$, $n\ge 1$




\noindent Given an instance $\delta_\mathcal T$, the {\em answer} to $Q$ is denoted by $Ans_Q(\delta_\mathcal T)$ and it 
is defined as the pairing of the values of the analytic expressions $AE_1, AE_2, \ldots AE_n$ that is:

$Ans_Q(\delta_\mathcal T)= Val_{AE_1}(\delta_\mathcal T)\wedge \ldots \wedge Val_{AE_n}(\delta_\mathcal T)$

\noindent We shall also denote this answer by $Ans_Q$ whenever $\delta_\mathcal T$ is understood from context.
\end{definition}
\subsection{Mixed queries}\label{subsec:MQ} 
We have seen that a traversal query over a tree $\mathcal T$ is either a single path expression over $\mathcal T$ or the pairing of two or more path expressions over $\mathcal T$ having the same source; and that an analytic query is either a single analytic expression over $\mathcal T$ or the pairing of two or more analytic expression over $\mathcal T$ having the same source. Hence we can unify the two types of expression over $\mathcal T$ in the following definition of `mixed query' over $\mathcal T$.


\begin{definition}[Mixed query]
\noindent Let $\mathcal D$ be a tree database and let $\mathcal T$ be a tree in $\mathcal D$. A {\em mixed query} $Q$ over $\mathcal T$ (or simply {\em query} over $\mathcal T$) is defined to be the pairing of two or more expressions over $\mathcal T$ having the same source and distinct targets: 

$Q= E_1\wedge \ldots \wedge E_n$, $n\ge 1$ 

\noindent where each $E_i$ is either a path expression or an analytic expression over $\mathcal T$.



\noindent Given an instance $\delta_{\mathcal T}$ of $\mathcal T$, let $f_1, \ldots, f_n$ be the functions to which $E_1, \ldots, E_n$ evaluate in $\delta_{\mathcal T}$, respectively. Then the {\em answer} to $Q$ is denoted by $Ans_Q(\mathcal T)$ (or simply by $Ans_Q$ whenever $\mathcal T$ is understood) and it is defined as follows: 

$Ans_Q= f_1\wedge \ldots\wedge f_n$

\end{definition} 
For example, referring to Figure \ref{Fig-1}, the following is a mixed query over the tree $\mathcal T_7$:

\smallskip
$Q= r\wedge (b, q, sum)$

\smallskip
\noindent This query returns, for each branch $x\in Branch$, the pair $(r(x), BrTot(x))$ that is the region to which the branch $x$ is located and the total quantity, $BrTot(x)$, delivered to  branch $x$.

\smallskip
\noindent Note that, as in the case of traversal queries, a mixed query $Q$ induces a relation schema $R_Q$, and for every instance $\delta_{\mathcal T}$, a relation $r_Q$ over $R_Q$. In our previous example $R_Q$ has three attributes, $Region$, $Branch$ and $BrTot)$ and two functional dependencies, $Branch\to Region$ and $Branch\to BrTot$. Moreover, for every instance of $\mathcal T$, the relation $r_Q$ is the answer to $Q$ in that instance and satisfies the two dependencies. 

\noindent Also note that in a mixed query we can apply the evaluation techniques that we have seen earlier by grouping together path expressions or by grouping together analytic expressions in the query. For example, consider the following mixed query over the tree $\mathcal T_7$ of Figure \ref{Fig-1}, which consists of one path expression and two analytic expressions: \smallskip





$Q= r\wedge (b, q, sum)\wedge (b, q, avg)$ \smallskip

\noindent This query asks, for each branch, the region in which the branch is located together with the total and the average for the branch. 

\noindent If we group together the analytic expressions then we can replace their pairing by the following expression: $(b, q, \{sum, avg\})$; and then we can replace the query $Q$ with the following query (which is less expensive to evaluate):\smallskip 

$Q'= r\wedge (b, q, \{sum, avg\})$ \smallskip

\noindent We end this section by noting that (a) traversal queries and analytic queries have similar syntax although they have different semantics and (b) mixed queries offer the possibility to use combinations of traversal and analytic queries in a principled way. In a sense, we can say that the query language of a tree database is actually the set of all mixed queries. 

\section{Applications}\label{sec:Apps} 

In this section we discuss the expressive power of our model by outlining three possible applications, namely (a) how attribute inheritance can be defined seamlessly in our model using  one-to-one edges and tree composition (b) how one can define a consistent relational database as a set of mixed queries over a tree database - with the tree database playing the role of an underlying semantic layer; and (c) how a tree database can be used as a user-friendly interface for accessing and analyzing relational data. 

\subsection{Inheritance}\label{subsec:Inh} 

One of the most intriguing - and at the same time most problematic - notions in object-oriented programming is inheritance \cite{DBLP:journals/csur/Taivalsaari96}. Inheritance is commonly regarded as the feature that distinguishes object-oriented programming from other modern programming paradigms, but researchers rarely agree on its meaning and usage. Yet inheritance is often hailed as a solution to many problems hampering software development, and moreover, many of the alleged benefits of object-oriented programming, such as improved conceptual modeling and re-usability, are largely credited to it. Inheritance is the fundamental reuse mechanism in object-oriented programming languages.\smallskip

\noindent In our model, inheritance is the process that allows an entity (i.e. the root of a tree)  to acquire or {\em inherit} the attributes of another entity (i.e. of the root of another tree) - in much the same way as in  traditional databases, where inheritance allows a table to acquire the properties of another table. The tree that inherits is called the {\em child tree} and the tree whose attributes are inherited is called the {\em parent tree}. 

\begin{definition}[Inheritance]
Let $\mathcal D$ be a tree database and $\mathcal T, \mathcal T'$ two disjoint trees of $\mathcal D$ (i.e. no node in common) with roots $\rho$ and $\rho'$, respectively. We say that $\rho$ {\em inherits} from $\rho'$ if the following hold: 
\begin{enumerate}
    \item dom($\rho$) = dom($\rho'$) 
    \item $\delta_{\mathcal T}(\rho)\subseteq \delta_{\mathcal T'}(\rho')$, in every database instance 
    \item there is an edge $Isa: \rho\to \rho'$ such that, in every database instance, we have:  

$Isa(x)= x$ for all $x\in \rho$ (i.e. $Isa$ is a 1-1 edge) 
\end{enumerate} 
\end{definition}

\noindent A word of explanation is necessary here in order to understand how $\rho$ can {\em inherit} from $\rho'$ by adding an Isa edge from $\rho$ to $\rho'$.  First note that, as $\mathcal T, \mathcal T'$ are disjoint trees, $\mathcal T$ is not composable with $\mathcal T'$. However, 
by adding an $Isa$ edge from $\rho$ to $\rho'$, we can view $\mathcal T$, together with the added $Isa$ edge, as a single tree, say  $\mathcal T^+$. In this way, $\rho'$ becomes a leaf of $\mathcal T^+$ while being the root of   $\mathcal T'$. Therefore $\mathcal T^+$ becomes composable with $\mathcal T'$. It follows that, in the tree $\mathcal T' \circ \mathcal T^+$, there is a unique path from $\rho$ to every attribute $A$ of $\mathcal T'$. Therefore $\rho$ inherits every attribute $A$ of $\rho'$ through the unique path from $\rho$ to $A$. 

\noindent We note in passing that performing the composition $\mathcal T' \circ \mathcal T^+$ requires restriction propagation only in $\mathcal T'$ (but not in $\mathcal T$, because of item 2 in the above definition). On the other hand, the database must satisfy the constraint expressed by item 3 in the above definition in order to support the inheritance represented by the $Isa$ edge. Finally note that, as a consequence of item 3 above, the composition of $Isa$ edges is also an $Isa$ edge.  \smallskip

\noindent To illustrate how inheritance can be supported by our model, suppose that we have two trees in the database, the $Car$ tree and the $Boat$ tree each with their own attributes, as shown in Figure \ref{Fig-3}(a). The $Car$ attributes are $Country, Manuf, Price$, and the $Boat$ attributes are $Masts, Manuf, Price$ (`Masts' meaning `number of masts' in the boat). As the two entities $Car$ and $Boat$ have the attributes $Manuf$ and $Price$ in common, the idea is to (a) `factor out' (i.e. remove) the common attributes $Manuf$ and $Price$ from the $Car$ tree and from the $Boat$ tree, (b) create a third entity, $Vehicle$, having $Manuf$ and $Price$ as its attributes and (c) place two $Isa$ edges, $Isa_{CV}: Car\to Vehicle$ and $Isa_{BV}: Boat\to Vehicle$ so as $Car$ and $Boat$ each inherit the attributes $Manuf$ and $Price$ of $Vehicle$ (in the way described above). This process is depicted in Figure \ref{Fig-3}(b).

\noindent Note that a tree can inherit from one or more disjoint trees and can be inherited by one or more disjoint trees (using tree composition). \smallskip

 \begin{figure}
{
\begin{center}
\includegraphics[width=350px,keepaspectratio]{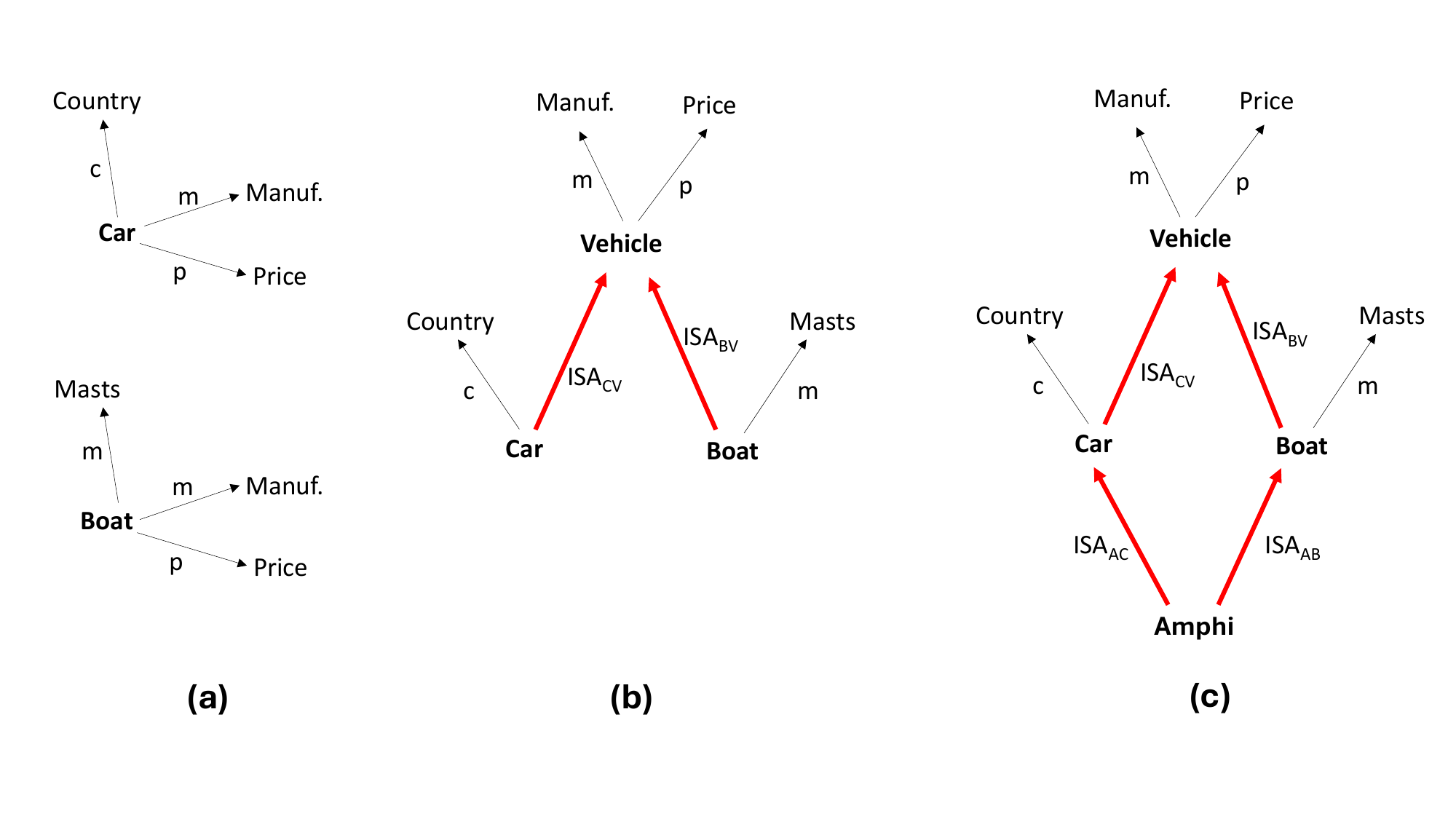}
\caption{(a) The Car tree and the Boat tree (b) Inheritance by Car and Boat from the Vehicle tree (c) Multiple inheritance by the Amphi tree and the diamond problem \label{Fig-3}}
\end{center}
}
\end{figure}

\noindent Going one step further, suppose now that we want to create an entity $Amphi$ (short for `Amphibious') that inherits the attributes of $Car$ and $Boat$. To do so we can create a new tree, the $Amphi$ tree (with only one node, $Amphi$) and add to it two $Isa$ edges:  $Isa_{AC}: Amphi\to Car$ and $Isa_{AB}: Amphi\to Boat$, as shown in Figure \ref{Fig-3}(c). Then $Amphi$ inherits the attributes $Country$ from $Car$ and $Masts$ from $Boat$ (this kind of inheritance from two or more other trees is called `multiple inheritance'). Now however, we have a problem as $Amphi$ inherits $Make$ and $Price$ by composing the $Amphi$ tree with the $Vehicle$ tree (based on the $Isa$ edge $Isa_{CV}\circ Isa_{AC}$), and at the same time $Amphi$ inherits the same attributes $Make$ and $Price$ by composing the $Amphi$ tree with the $Boat$ tree (based on the $Isa$ edge $Isa_{BV}\circ Isa_{AB}$). However, nothing guarantees that these two ways of inheriting will return the same values for $Make$ and $Price$! This problem, known as the {\em Diamond Problem} drew much controversy in the object-oriented programming community as to whether we should use multiple inheritance \cite{DBLP:journals/csys/Waldo91}. In our context, we can envisage two solutions to the problem: \smallskip

\noindent - maintain the constraint $Isa_{CV}\circ Isa_{AC}= Isa_{AB}\circ Isa_{BV}$ during database updates

\noindent - or require that the set of $Isa$ edges in the database form one or more trees 

(i.e. no parallel paths formed by $Isa$ edges - as is the case in Figure \ref{Fig-3}) \smallskip

\noindent The main advantage of using inheritance in a tree database is the same as in a classical database that is: inheritance allows a newly created tree to acquire the information contained in an existing tree, thus lowering redundancy by creating a hierarchical structure of trees expressed by the $Isa$ edges present in the database.  
Inheritance hierarchies are crucial to building a structured and well-organized tree database. However, in such a database, the $Isa$ edges should form one or more trees otherwise we may run into problems (such as the diamond problem). \smallskip

\noindent As a last remark, in presence of multiple inheritance, as in the example of Figure~\ref{Fig-3} we can (a) express semantic constraints and (b) use a set-theoretic language to extract additional, useful information from the database. For example, the following set-theoretic expressions might be imposed as a constraint in our example database:

$\delta_\mathcal T(Vehicle)= Isa_{CV}(Car)\cup Isa_{AB}(Boat)$ 

\noindent expressing the fact that there can't be vehicles in the database other than cars and boats. Clearly, as usual, such a constraint will have to be maintained during database updates. 

\noindent As an example of semantic query involving $Isa$ edges consider the following: are there vehicles in the database which are neither cars nor boats? This query can be answered by computing the following set-theoretic expression: \smallskip

$E= \delta_\mathcal T(Vehicle) \subseteq (Isa_{CV}(Car)\cup Isa_{AB}(Boat))$  \smallskip

\noindent If the value of $E$ is `true' then there is no vehicle in the database which is neither a car nor a boat; and if the answer is `false' then there is vehicle in the database which is neither a car nor a boat. \smallskip




\noindent As a last remark, it is clear that the subject of inheritance in tree databases offers a rich ground for research, and in fact this is a main subject of our current work (see also the following Section \ref{sec: Conclusions}). 

\subsection{Defining relations over a tree database} 


We have seen that, in a tree database $\mathcal D$, a mixed query over a tree $\mathcal T$ (hereafter called simply `query') is of the form $Q= E_1\wedge E_2\wedge \ldots \wedge E_n$, where each $E_i$ denotes either a path expression or an analytic expression from a common source node $S$ to a target node $A_i$, $i= 1, 2, \ldots, n$. We have also seen that, given an instance of $\mathcal T$, each $E_i$ evaluates to a function $f_i:S\to A_i$, so the answer to $Q$ is the pairing $f_1\wedge \ldots\wedge f_n$ (i.e. a function from $S$ to $A_1\times \ldots \times A_n$). Finally, we have also seen that each query $Q$ induces a relation schema $R_Q(S, A_1, A_2, \ldots, A_n)$; and that, for every instance $\delta_\mathcal T$ of $\mathcal T$, the answer of $Q$ in $\delta_\mathcal T$ induces a relation $r_Q$ over $R_Q$ which satisfies the functional dependencies $f_1, f_2, \ldots, f_n$ (see Section \ref{subsec:MQ}). 

\noindent In other words, each query $Q$ over $\mathcal D$ induces a relation schema $R_Q$ and a relation $r_Q$ over $R_Q$ which is consistent with the functions defining the answer of $Q$. 
For example, consider the following queries: \smallskip

\noindent $Q_1= s\wedge c$, a traversal query returning, the supplier and category of each product

\noindent $Q_2= (p, q, sum)$, an analytic query returning the total quantity by product (call it $ProdTot$). \smallskip

\noindent These queries induce each a relation schema as follows: \smallskip

\noindent $R_{Q_1}(Prod, Sup, Cat)$ with dependencies: $Prod\to Sup$ and $Prod\to Cat$ 

\noindent $R_{Q_2}(Prod, ProdTot)$ with dependency $Prod\to ProdTot$ \smallskip

\noindent The above two relation schemas define a relational database schema $RDB= \{R_{Q_1}, R_{Q_2}\}$; and, for each instance of $\mathcal D$, the answers of $R_{Q_1}$ and $R_{Q_2}$ in that instance induce the relations $r_{Q_1}$ and $r_{Q_2}$, respectively. Moreover, $r_{Q_1}$ and $r_{Q_2}$ are consistent each with respect to their dependencies. \smallskip

 \noindent The important thing to note here is that the values of each non-key attribute $A_i$ of $r_Q$ are computed by evaluating the expression $E_i$ of $Q$ of which $A_i$ is the target. Therefore the expression $E_i$ provides the {\em semantics} of attribute $A_i$ of $R_Q$, $i= 1, \ldots, m$. As a consequence, when defining a relational database over a tree database we have a significant semantic gain: the relation schema $R_Q(S, A_1, \ldots, A_n)$ comes with a clear semantics for its attributes, this semantics being the expression defining each attribute. In contrast, in a traditional database, the attributes of a relation schema have only the intuitive semantics conveyed by their names. Therefore when a relational database is defined as a set of queries over a tree database, the tree database works as the underlying `semantic layer' of the relational database thus defined. \smallskip

\noindent We end this subsection by two remarks concerning the use of relational databases defined over of a tree database:
\begin{enumerate}
    \item The user of a relational database defined over a tree database actually `sees' a set of relation schemas (i.e. a set of tables), exactly as when interacting with a traditional relational database. Here, however, in contrast to a traditional relational database, each table is associated with clear semantics that the user can inspect assuming that (a) the expression computing the values of each attribute $A_i$ is associated with $A_i$ (but stays invisible to the user) and (b) all attributes of a table are made clickable so as the  interested user can click an attribute of a table and `see' the expression used to compute the values of that attribute. This expression says how the functions of the underlying tree database are combined in order to compute the values of that attribute. \smallskip
    \item It is natural that the user of relational database defined by a set of queries over a tree database should be able to ask relational algebra queries, as usual (or SQL queries in general). Therefore the system will have to translate each user query as a query over the underlying tree database and return the answer of the user query in the form of a relation. Designing and implementing such translation algorithms is part of our current research. 
\end{enumerate} 

\subsection{A tree database as an interface}\label{subsec:Interface} 

The idea to use a tree database as an interface to a relational database requires to answer a larger question, namely: under what conditions can we transform a relational database to an `equivalent' tree database? To answer this question, we proceed in two steps as follows: 
\begin{itemize}
    \item {\em\bf Step 1}: Transform the relational database schema $\mathcal S$ to a set of trees.
    \item {\em\bf Step 2}: For each (consistent) instance of a relational database over $\mathcal S$ define 
    
an `equivalent' instance of the tree database of the previous step. 
\end{itemize}
Implementing the above  steps is a long term project and here we only outline some first ideas and results. First, we recall a few basic facts from relational database theory \cite{Ullman83}. \\

\noindent A {\em minimal cover} of a set $\mathcal F$ of (functional) dependencies is a set $\mathcal F'$ of dependencies which is equivalent to $\mathcal F$ and such that:
\begin{enumerate}
    \item The right-hand side of each dependency in $\mathcal F'$ is a singleton.

    \item The left-hand side of each dependency in $\mathcal F'$ is minimal that is there is no redundant (or extraneous) attribute.

    \item $\mathcal F'$ is minimal that is it contains no redundant dependency $f$ 
    
    ($f$ is redundant if implied by the set of dependencies $\mathcal F'\setminus \{f\}$). 
\end{enumerate} 
\smallskip
\noindent We also recall that $\mathcal F$ is equivalent to $\mathcal F'$ if $\mathcal F$ implies every dependency of $\mathcal F'$ and conversely, $\mathcal F'$ implies every dependency of $\mathcal F$. It can be shown that the minimal cover of a set of dependencies always exists but is not necessarily unique \cite{Ullman83}. \smallskip

\noindent Now, to implement the first step above, let $(R(\mathcal A), \mathcal F)$ be a relation schema, where $R$ is the name of the schema, $\mathcal A$ the set of attributes and $\mathcal F$ the set of dependencies that each relation over $\mathcal A$ must satisfy in order to be consistent. Moreover, we assume that $\mathcal F$ is `minimized' that is $\mathcal F$ is equal to one of its minimal covers. 
Now, a relation $r$ over $R$ is consistent only if the following hold: 
\begin{itemize}
    \item For every dependency $f: X\to A$ in $\mathcal F$, the projection $\pi_{XA}(r)$ is a function from $\pi_{X}(r)$ to $\pi_{A}(r)$ 
    \item There are no two different dependency paths with the same source and the same target (because then no relation over $R$ with more than one tuple can be consistent - see previous item)
    \item There cannot be a cycle of functional dependencies over an attribute $A$ of $\mathcal A$ (or over a set of attributes of $\mathcal A$) other than the identity cycle on $A$. 
    
    Indeed, attribute domains are abstract sets of values that is the values of an attribute domain are pairwise independent; therefore, if such a cycle existed on $A$ and we denoted by $\mathcal C$ the composition of its functions, then for every $a\in dom(A)$ we must have $\mathcal C(a)= a$ otherwise, if $\mathcal C(a)= a'$ and $a'\neq a$ this would mean that $a'$ depends on $a$, a contradiction. 
\end{itemize} 
It follows that each key $K$ of $\mathcal F$, together with all dependency paths having $K$ as source form a tree with root $K$. As a consequence, the set of trees induced by all relation schemas in $RDB$ forms a tree database, say $TDB$. \smallskip

\noindent Now, to implement the second step above, given a relation schema $(R(\mathcal A), \mathcal F)$ in $RDB$, an instance $r$ of $R$ and a tree $\mathcal T$ of $TDB$ induced by $\mathcal F$, we can define the instance of each node $N$ and each edge $e:X\to Y$ of $\mathcal T$ as follows: $\delta_\mathcal T(N)= \pi_N(r)$ and $\delta_\mathcal T(e)= \pi_{XY}(r)$ (which is a total function from $X$ to $Y$ since $r$ satisfies the dependency $X\to Y$). The tree database $TDB$ defined in this way can then act as an interface to $RDB$. Moreover, this interface can be tailored to the needs of specific users or groups of users by creating tree views of $TDB$ as defined earlier (see Section \ref{subsec:Views}).

\noindent Using the view mechanism the interface administrator can define a set of tree views tailored to the needs of a user or a group of users. 

\noindent We have already done some preliminary work along these lines as reported in \cite{VitsaxakiNMS22}.

\section{Concluding remarks and perspectives}\label{sec: Conclusions} 

We have seen a novel database model in which the database is seen as a set of trees, where the nodes of each tree represent the data sets of an application and the edges represent total functions between datasets. The query language of our model offers two kinds of queries, traversal queries and analytic queries, whose definitions are based on a set of elementary operations on functions that we called, collectively, the functional algebra. 
A traversal query over a tree is the pairing of path expressions with the same source, 
whereas an analytic query over a tree is the pairing of analytic expressions with the same source.  
Whether traversal or analytic, a query is always defined over a tree which can be either a tree in the database or a tree defined from other trees using a set of operations on trees that we called, collectively, the tree algebra. We have also seen that both traversal and analytic queries can be expressed using SQL-like statements.   
We have also outlined how our model can support inheritance; how a consistent relational database can be defined on top of a tree database (with the tree database providing the underlying semantics for attributes and relations); and how a tree database can be used as a user friendly interface for accessing and analyzing relational data. \smallskip

\noindent This work follows our previous work on graph database modeling in which the whole database is represented as a single graph, namely as a labeled directed acyclic graph with a single root called a `context' \cite{spyratos2023context}. However, we believe that, using a set of trees together with a set of tree operations, instead of a single graph, provides more flexibility and opens the way for modeling such important concepts as inheritance - to mention just one. To the best of our knowledge, no other model uses the concept of tree as the basic database structure at the conceptual level.\smallskip 

\noindent Our model considers functions as `first class citizens' forming trees, while relations are derived concepts. This is in sharp contrast with the relational model, in which it is the relations that are considered as `first class citizens' and functional dependencies are simply `constraints' that the relations must satisfy. Put differently, in the relational model, functional dependencies play a static role, whereas in our model they play a dynamic role in the sense that they express key-value properties between nodes to represent information. In this sense, our model is closer to NoSQL databases\cite{DBLP:journals/cacm/Stonebraker10}  and in particular to key-value databases   \cite{DBLP:conf/data/FernandesB18}. Indeed, each edge $f: X\to Y$ of a tree in our model represents a total function from node $X$ to node $Y$; and as such, each edge $f$ can be seen as a set of key-value pairs $(x, f(x))$ where $x$ is the key and $f(x)$ is the value. Therefore, in principle, it should be possible to store the key-value pairs of a tree database using a key-value store. We envisage this as a long term perspective of our present work.\smallskip


\noindent Our current work develops along a number of research lines that we describe succinctly below: \smallskip

\noindent {\bf \em Query rewriting}: Since our model is purely functional, the functional algebra provides the appropriate framework for studying query rewriting. We have given two examples of such rewriting, namely traversal query rewriting based on the fact that composition distributes over pairing (Section \ref{sec:TQ}) and analytic query rewriting based on the composition rule (Section \ref{RAE}). However, more work is needed in order to develop a complete rewriting system for optimizing the evaluation of traversal and analytic queries. \smallskip
   
\noindent {\bf \em Updating}: Adding or removing a key-value pair $(x, f(x))$ in an edge $f$ while preserving function totality may provoke updates in other edges of a tree (and eventually in all edges). 
Hence the need for efficient algorithms for the propagation of updates throughout the database. Additionally, such algorithms will have to take into account the eventual presence of the two types of integrity constraints that we have seen, namely refinement constraints and set theoretic constraints (see Section \ref{IC}) - as well as constraints due to the presence of $Isa$ edges in the database. Designing algorithms that allow updating a tree database while satisfying all database constraints is an important theme that needs further work. 

\smallskip
\noindent {\bf \em Inheritance}: Our intention in this paper was to simply show that our model can support classical inheritance in a seamless manner. We did so by considering trees that are disjoint. However, the concept of inheritance can be generalized to trees that are not disjoint. Indeed, even if two trees $\mathcal T$ and $\mathcal T'$ share one or more nodes they might have subtrees which are disjoint. For example, a subtree of $\mathcal T$ might be disjoint with $\mathcal T'$ (or vice versa); or there might be pairs of subtrees (one in each tree) that are disjoint. Such subtrees can be obtained by tree projections and/or star tree partition (see Section \ref{subsec:TP}). So the problem here can be stated as follows: given a set of trees (e.g. those in a tree database) use tree projection and/or tree partition to find a (new) set of trees of interest connected through $Isa$ edges and offering to the user a `web of trees' in which each $Isa$ edge from a tree $\tau$ to a tree $\tau'$ `imports' or `adds' in $\tau$ the information contained in $\tau'$. In such a web of trees the $Isa$ edges act as `hyperlinks' similar to those in the World Wide Web. 

\noindent Our current work aims at defining such webs of trees having specified properties and serving specific application environments (e.g. organized online collections of documents, books, audio, video, and other multimedia files that users can access electronically). \smallskip

\noindent {\bf \em Tree database design} We have seen that a tree database is a set $\mathcal D$ of trees representing the datasets of an application and their relationships. Clearly there might be several choices of trees to put in the database following a set of criteria (e.g. small trees with an easy to grasp structure versus larger trees expressing more closely the structure of the application). Hence the question: given a set  $\mathcal D$ of trees representing an application how can we find an  `equivalent' set  $\mathcal D'$  of trees satisfying a set of criteria? To this end one would obviously need to use the tree algebra which allows to decompose large trees into smaller trees or to compose small trees to create larger ones.  So the relevant question here is: what set of criteria should one adopt. Two examples of such criteria from relational databases are: (a) preservation of information and (b) preservation of dependencies \cite{Ullman83}. 

\noindent Designing algorithms that take the set  $\mathcal D$ as input and produce as output an equivalent set $\mathcal D'$  satisfying a set of given criteria is what we mean by `tree database design' and this is a main topic of our current research. \smallskip

\noindent {\bf \em Revisiting relational model theory}: In the previous section, we have seen how a consistent relational database can be defined on top of a tree database. We have also seen how we can use the functional algebra to `simulate' Armstrong's axioms (Section \ref{subsec:FA}, Lemma \ref{AA}). In view of these results it is natural to revisit relational database theory and get new insights into functional dependency theory as well as into schema design theory for relational databases. \\

\noindent {\bf \em Data visualization}: We are currently investigating the development of a visualization model, proper to tree databases, following the concepts introduced in previous work \cite{SpyratosS19}. Such a model should allow the visualization and visual exploration of query results. \\

\noindent As a longer term project, we envisage the design and implementation of a full-fledged tree-database management system in which storing, accessing and analyzing data is based on the concept of tree as the basic data structure. Such a system should also provide support for inheritance through tree composition, for interfacing relational data warehouses and for the visualization and visual exploration of query results.

\section*{Acknowledgements}
The author would like to thank Professor Dominique Laurent for his comments that helped improve the content of this paper.


\bibliographystyle{plain}
\bibliography{biblio}

@inproceedings{LaurentS11,
  author       = {Dominique Laurent and
                  Nicolas Spyratos},
  editor       = {William I. Grosky and
                  Youakim Badr and
                  Richard Chbeir},
  title        = {Rewriting aggregate queries using functional dependencies},
  booktitle    = {{MEDES} '11: International {ACM} Conference on Management of Emergent
                  Digital EcoSystems, San Francisco, CA, USA, November 21-24, 2011},
  pages        = {40--47},
  publisher    = {{ACM}},
  year         = {2011},
  url          = {https://doi.org/10.1145/2077489.2077497},
  doi          = {10.1145/2077489.2077497},
  bibsource    = {dblp computer science bibliography, https://dblp.org}
}

@inproceedings{Ullman83,
  author    = {Jeffrey D. Ullman},
  editor    = {R. E. A. Mason},
  title     = {Universal Relation Interfaces for Database Systems},
  booktitle = {Information Processing 83, Proceedings of the {IFIP} 9th World Computer
               Congress, Paris, France, September 19-23, 1983},
  pages     = {243--252},
  publisher = {North-Holland/IFIP},
  year      = {1983},
  bibsource = {dblp computer science bibliography, https://dblp.org}
}

@BOOK{Ullman,
  author =       {Jeffrey D. Ullman},
  title =        {Principles of Databases and Knowledge-Base Systems},
  publisher =    {Computer Science Press},
  year =         {1988},
  volume =       {1-2},
}

@INPROCEEDINGS{BR,
  AUTHOR =       "K.S. Beyer and R. Ramakrishnan",
  TITLE =        "Bottom-Up Computation of Sparse and Iceberg CUBEs",
  BOOKTITLE =    "ACM SIGMOD",
  YEAR =         "1999",
  pages =        "359-370",
  address =      "Philadelphia, Pennysylvania, USA",
}

@article{SpyratosS18,
  author    = {Nicolas Spyratos and
               Tsuyoshi Sugibuchi},
  title     = {{HIFUN} - a high level functional query language for big data analytics},
  journal   = {J. Intell. Inf. Syst.},
  volume    = {51},
  number    = {3},
  pages     = {529--555},
  year      = {2018},
  url       = {https://doi.org/10.1007/s10844-018-0495-6},
  doi       = {10.1007/s10844-018-0495-6},
  bibsource = {dblp computer science bibliography, https://dblp.org}
}

@inproceedings{SpyratosS19,
  author    = {Nicolas Spyratos and
               Tsuyoshi Sugibuchi},
  editor    = {Alfredo Cuzzocrea and
               Sergio Greco and
               Henrik Legind Larsen and
               Domenico Sacc{\`{a}} and
               Troels Andreasen and
               Henning Christiansen},
  title     = {Data Exploration in the {HIFUN} Language},
  booktitle = {Flexible Query Answering Systems - 13th International Conference,
               {FQAS} 2019, Amantea, Italy, July 2-5, 2019, Proceedings},
  series    = {Lecture Notes in Computer Science},
  volume    = {11529},
  pages     = {176--187},
  publisher = {Springer},
  year      = {2019},
  url       = {https://doi.org/10.1007/978-3-030-27629-4\_18},
  doi       = {10.1007/978-3-030-27629-4\_18},
  bibsource = {dblp computer science bibliography, https://dblp.org}
}

@article{PapadakiST21,
  author    = {Maria{-}Evangelia Papadaki and
               Nicolas Spyratos and
               Yannis Tzitzikas},
  title     = {Towards Interactive Analytics over {RDF} Graphs},
  journal   = {Algorithms},
  volume    = {14},
  number    = {2},
  pages     = {34},
  year      = {2021},
  url       = {https://doi.org/10.3390/a14020034},
  doi       = {10.3390/a14020034},
  bibsource = {dblp computer science bibliography, https://dblp.org}
}

@article{ZervoudakisKSP21,
  author    = {Petros Zervoudakis and
               Haridimos Kondylakis and
               Nicolas Spyratos and
               Dimitris Plexousakis},
  title     = {Query Rewriting for Incremental Continuous Query Evaluation in {HIFUN}},
  journal   = {Algorithms},
  volume    = {14},
  number    = {5},
  pages     = {149},
  year      = {2021},
  url       = {https://doi.org/10.3390/a14050149},
  doi       = {10.3390/a14050149},
  bibsource = {dblp computer science bibliography, https://dblp.org}
}

@article{VitsaxakiNMS22,
author = {Katerina Vitsaxaki and Stavroula Ntoa and George Margetis and Nicolas Spyratos},
title = {Interactive Visual Exploration of Big Relational Datasets},
journal = {International Journal of Human–Computer Interaction}, 
year = {2022},
url = {https://doi.org/10.1080/10447318.2022.2073007},
doi = {10.1080/10447318.2022.2073007}
}

@inproceedings{DBLP:conf/sigmod/ArenasGS21,
  author       = {Marcelo Arenas and
                  Claudio Gutierrez and
                  Juan F. Sequeda},
  editor       = {Guoliang Li and
                  Zhanhuai Li and
                  Stratos Idreos and
                  Divesh Srivastava},
  title        = {Querying in the Age of Graph Databases and Knowledge Graphs},
  booktitle    = {{SIGMOD} '21: International Conference on Management of Data, Virtual
                  Event, China, June 20-25, 2021},
  pages        = {2821--2828},
  publisher    = {{ACM}},
  year         = {2021},
  url          = {https://doi.org/10.1145/3448016.3457545},
  doi          = {10.1145/3448016.3457545},
  bibsource    = {dblp computer science bibliography, https://dblp.org}
}

@inproceedings{DBLP:conf/cisim/Pokorny15,
  author       = {Jaroslav Pokorn{\'{y}}},
  editor       = {Khalid Saeed and
                  Wladyslaw Homenda},
  title        = {Graph Databases: Their Power and Limitations},
  booktitle    = {Computer Information Systems and Industrial Management - 14th {IFIP}
                  {TC} 8 International Conference, {CISIM} 2015, Warsaw, Poland, September
                  24-26, 2015. Proceedings},
  series       = {Lecture Notes in Computer Science},
  volume       = {9339},
  pages        = {58--69},
  publisher    = {Springer},
  year         = {2015},
  url          = {https://doi.org/10.1007/978-3-319-24369-6\_5},
  doi          = {10.1007/978-3-319-24369-6\_5},
  bibsource    = {dblp computer science bibliography, https://dblp.org}
}

@incollection{DBLP:reference/bdt/GutierrezHW19,
  author       = {Claudio Gutierrez and
                  Jan Hidders and
                  Peter T. Wood},
  editor       = {Sherif Sakr and
                  Albert Y. Zomaya},
  title        = {Graph Data Models},
  booktitle    = {Encyclopedia of Big Data Technologies},
  publisher    = {Springer},
  year         = {2019},
  url          = {https://doi.org/10.1007/978-3-319-63962-8\_81-1},
  doi          = {10.1007/978-3-319-63962-8\_81-1},
  bibsource    = {dblp computer science bibliography, https://dblp.org}
}

@article{DBLP:journals/corr/AnglesABHRV16,
  author       = {Renzo Angles and
                  Marcelo Arenas and
                  Pablo Barcel{\'{o}} and
                  Aidan Hogan and
                  Juan L. Reutter and
                  Domagoj Vrgoc},
  title        = {Foundations of Modern Graph Query Languages},
  journal      = {CoRR},
  volume       = {abs/1610.06264},
  year         = {2016},
  url          = {http://arxiv.org/abs/1610.06264},
  eprinttype    = {arXiv},
  eprint       = {1610.06264},
  bibsource    = {dblp computer science bibliography, https://dblp.org}
}

@inproceedings{DBLP:conf/aib/Hogan22,
  author       = {Aidan Hogan},
  editor       = {Camille Bourgaux and
                  Ana Ozaki and
                  Rafael Pe{\~{n}}aloza},
  title        = {Knowledge Graphs: {A} Guided Tour (Invited Paper)},
  booktitle    = {International Research School in Artificial Intelligence in Bergen,
                  {AIB} 2022, June 7-11, 2022, University of Bergen, Norway},
  series       = {OASIcs},
  volume       = {99},
  pages        = {1:1--1:21},
  publisher    = {Schloss Dagstuhl - Leibniz-Zentrum f{\"{u}}r Informatik},
  year         = {2022},
  url          = {https://doi.org/10.4230/OASIcs.AIB.2022.1},
  doi          = {10.4230/OASIcs.AIB.2022.1},
  bibsource    = {dblp computer science bibliography, https://dblp.org}
}

@article{DBLP:journals/csur/HoganBCdMGKGNNN21,
  author       = {Aidan Hogan and
                  Eva Blomqvist and
                  Michael Cochez and
                  Claudia d'Amato and
                  Gerard de Melo and
                  Claudio Gutierrez and
                  Sabrina Kirrane and
                  Jos{\'{e}} Emilio Labra Gayo and
                  Roberto Navigli and
                  Sebastian Neumaier and
                  Axel{-}Cyrille Ngonga Ngomo and
                  Axel Polleres and
                  Sabbir M. Rashid and
                  Anisa Rula and
                  Lukas Schmelzeisen and
                  Juan F. Sequeda and
                  Steffen Staab and
                  Antoine Zimmermann},
  title        = {Knowledge Graphs},
  journal      = {{ACM} Comput. Surv.},
  volume       = {54},
  number       = {4},
  pages        = {71:1--71:37},
  year         = {2022},
  url          = {https://doi.org/10.1145/3447772},
  doi          = {10.1145/3447772},
  bibsource    = {dblp computer science bibliography, https://dblp.org}
}

@article{DBLP:journals/csur/Wiederhold95,
  author       = {Gio Wiederhold},
  title        = {Mediation in Information Systems},
  journal      = {{ACM} Comput. Surv.},
  volume       = {27},
  number       = {2},
  pages        = {265--267},
  year         = {1995},
  url          = {https://doi.org/10.1145/210376.210390},
  doi          = {10.1145/210376.210390},
  bibsource    = {dblp computer science bibliography, https://dblp.org}
}

@article{spyratos2023context,
  title={The Context Model: A Graph Database Model},
  author={Nicolas Spyratos},
  year={2023},
  doi = {10.48550/ARXIV.2305.13895},  
  url = {https://arxiv.org/abs/2305.13895},  
  journal   = {CoRR},
  volume    = {abs/2305.13895},
  publisher = {arXiv},  
  copyright = {Creative Commons Attribution 4.0 International}
}

@incollection{DBLP:reference/db/Pitoura18d,
  author       = {Evaggelia Pitoura},
  editor       = {Ling Liu and
                  M. Tamer {\"{O}}zsu},
  title        = {Query Optimization},
  booktitle    = {Encyclopedia of Database Systems, Second Edition},
  publisher    = {Springer},
  year         = {2018},
  url          = {https://doi.org/10.1007/978-1-4614-8265-9\_861},
  doi          = {10.1007/978-1-4614-8265-9\_861},
  bibsource    = {dblp computer science bibliography, https://dblp.org}
}

@book{DBLP:books/cs/Maier83,
  author       = {David Maier},
  title        = {The Theory of Relational Databases},
  publisher    = {Computer Science Press},
  year         = {1983},
  url          = {http://web.cecs.pdx.edu/\%7Emaier/TheoryBook/TRD.html},
  isbn         = {0-914894-42-0},
  bibsource    = {dblp computer science bibliography, https://dblp.org}
}

@inproceedings{DBLP:conf/ifip/Armstrong74,
  author       = {William Ward Armstrong},
  editor       = {Jack L. Rosenfeld},
  title        = {Dependency Structures of Data Base Relationships},
  booktitle    = {Information Processing, Proceedings of the 6th {IFIP} Congress 1974,
                  Stockholm, Sweden, August 5-10, 1974},
  pages        = {580--583},
  publisher    = {North-Holland},
  year         = {1974},
  bibsource    = {dblp computer science bibliography, https://dblp.org}
}

@article{0c1f962ff899492c9ea15bd38a0ffb65,
title = "A survey of typical attributed graph queries",
author = "Yanhao Wang and Yuchen Li and Ju Fan and Chang Ye and Mingke Chai",
year = "2021",
month = jan,
doi = "10.1007/s11280-020-00849-0",
language = "English",
volume = "24",
pages = "297--346",
journal = "World Wide Web (New York)",
issn = "1386-145X",
publisher = "SPRINGER NEW YORK LLC",
}

@article{DBLP:journals/csur/Taivalsaari96,
  author       = {Antero Taivalsaari},
  title        = {On the Notion of Inheritance},
  journal      = {{ACM} Comput. Surv.},
  volume       = {28},
  number       = {3},
  pages        = {438--479},
  year         = {1996},
  url          = {https://doi.org/10.1145/243439.243441},
  doi          = {10.1145/243439.243441},
  bibsource    = {dblp computer science bibliography, https://dblp.org}
}

@misc{anuyah2024understandinggraphdatabasescomprehensive,
      title={Understanding Graph Databases: A Comprehensive Tutorial and Survey}, 
      author={Sydney Anuyah and Victor Bolade and Oluwatosin Agbaakin},
      year={2024},
      eprint={2411.09999},
      archivePrefix={arXiv},
      primaryClass={cs.DB},
      url={https://arxiv.org/abs/2411.09999}, 
}

@article{DBLP:journals/csys/Waldo91,
  author       = {Jim Waldo},
  title        = {Controversy: The Case for Multiple Inheritance in {C++}},
  journal      = {Comput. Syst.},
  volume       = {4},
  number       = {2},
  pages        = {157--171},
  year         = {1991},
  url          = {http://www.usenix.org/publications/compsystems/1991/spr\_waldo.pdf},
  bibsource    = {dblp computer science bibliography, https://dblp.org}
}

@inproceedings{DBLP:conf/data/FernandesB18,
  author       = {Diogo Fernandes and
                  Jorge Bernardino},
  editor       = {Jorge Bernardino and
                  Christoph Quix},
  title        = {Graph Databases Comparison: AllegroGraph, ArangoDB, InfiniteGraph,
                  Neo4J, and OrientDB},
  booktitle    = {Proceedings of the 7th International Conference on Data Science, Technology
                  and Applications, {DATA} 2018, Porto, Portugal, July 26-28, 2018},
  pages        = {373--380},
  publisher    = {SciTePress},
  year         = {2018},
  url          = {https://doi.org/10.5220/0006910203730380},
  doi          = {10.5220/0006910203730380},
  bibsource    = {dblp computer science bibliography, https://dblp.org}
}

@article{DBLP:journals/cacm/Stonebraker10,
  author       = {Michael Stonebraker},
  title        = {{SQL} databases v. NoSQL databases},
  journal      = {Commun. {ACM}},
  volume       = {53},
  number       = {4},
  pages        = {10--11},
  year         = {2010},
  url          = {https://doi.org/10.1145/1721654.1721659},
  doi          = {10.1145/1721654.1721659},
  bibsource    = {dblp computer science bibliography, https://dblp.org}
}

@misc{Mendelzon,
  author = {Alberto O. Mendelzon},
  title = {Who won the Universal Relation Wars?},
  year = {2002},
  note = {Symposium in honor of Jeffrey D. Ullman. Stanford University~--~Last accessed 4 March 2026~--~Available at {\tt http://infolab.stanford.edu/jdu-symposium/talks/mendelzon.pdf}},
  url = {http://infolab.stanford.edu/jdu-symposium/talks/mendelzon.pdf}
}

\end{document}